\newtheorem{mylemma}{Lemma}
\newcommand{\blue}[1]{\textcolor[rgb]{0.00,0.00,0.00}{{#1}}}
\newcommand{\mjrcolor}[1]{\textcolor[rgb]{0.00,0.00,0.00}{{#1}}}
\newcommand{\eq}[1]{Eq.~\eqref{#1}}
\newcommand{\myitem}[1]{\vspace{0.25\baselineskip}\noindent\textbf{#1}}%\vspace*{0.04in}
\definecolor{orange}{rgb}{1,0.5,0}
\newcommand{\ourAlgo}[0]{\texttt{CABaRet}\xspace}%{\texttt{Algorithm~\ref{alg:recommendation}}\xspace}
\newcommand{\secref}[1]{Section~\ref{#1}}
\newcommand{\mjr}[1]{\mjrcolor{#1}}
\newcommand{\mjrout}[1]{}
\begin{document}
%\pagestyle{empty}

% \title{CABaRet: Leveraging Recommendation Systems for Mobile Edge Caching}
\title{Network-aware Recommendations in the Wild: Methodology, Realistic Evaluations, Experiments}

% \author{Savvas~Kastanakis\textsuperscript{1,2}, 
% 		Pavlos~Sermpezis\textsuperscript{2},
% 		Vasileios~Kotronis\textsuperscript{2},
% 		Xenofontas Dimitropoulos\textsuperscript{1,2}\\
%         \textsuperscript{1}~University of Crete,~Greece,~kastan@csd.uoc.gr\\
%         \textsuperscript{2}~FORTH,~Greece,~\{sermpezis,vkotronis,fontas\}@ics.forth.gr
%         }

\author{Savvas~Kastanakis\textsuperscript{1,2}, 
		Pavlos~Sermpezis\textsuperscript{3}, 
		Vasileios~Kotronis\textsuperscript{2},\\
		Daniel  Menasch\'e\textsuperscript{4},
		Thrasyvoulos Spyropoulos\textsuperscript{5}\\
		~\\
        \textsuperscript{1}~University of Crete, Greece; 
        \textsuperscript{2}~FORTH-ICS,Greece; 
        \textsuperscript{3}~Aristotle University of Thessaloniki, Greece; 
        \textsuperscript{4}~Federal University of Rio de Janeiro, Brazil; 
        \textsuperscript{5}~EURECOM, Sophia-Antipolis, France
        }

\maketitle

\begin{abstract}
\blue{Joint caching and recommendation has been recently proposed as a new paradigm for increasing the efficiency of mobile edge caching. Early findings demonstrate significant gains for the network performance. However, previous works evaluated the proposed schemes exclusively on simulation environments. Hence, it still remains uncertain whether the claimed benefits would change in real settings. In this paper, we propose a methodology that enables to evaluate joint network and recommendation schemes in real content services by only using publicly available information. We apply our methodology to the YouTube service, and conduct extensive measurements to investigate the potential performance gains. Our results show that significant gains can be achieved in practice; e.g., 8 to 10 times increase in the cache hit ratio from cache-aware recommendations. Finally, we build an experimental testbed and conduct experiments with real users; we make available our code and datasets to facilitate further research. To our best knowledge, this is the first realistic evaluation (over a real service, with real measurements and user experiments) of the joint caching and recommendations paradigm. Our findings provide experimental evidence for the feasibility and benefits of this paradigm, validate assumptions of previous works, and provide insights that can drive future research.}

%Joint caching and recommendation has been recently proposed for increasing the efficiency of mobile edge caching. While previous works assume collaboration between mobile network operators and content providers (who control the recommendation systems), this might be challenging in today's economic ecosystem, with existing protocols and architectures. In this paper, we propose an approach that enables cache-aware recommendations without requiring a network and content provider collaboration. We leverage information provided publicly by the recommendation system, and build a system that provides cache-friendly and high-quality recommendations. We apply our approach to the YouTube service, and conduct extensive measurements on YouTube video recommendations and experiments with video requests, to evaluate the potential gains in the cache hit ratio. We also analytically study the problem of caching optimization under our approach. Our results show that significant caching gains can be achieved in practice; 8 to 10 times increase in the cache hit ratio from cache-aware recommendations, and an extra 2 times increase from caching optimization.
%\blue{Finally, we build an experimental testbed, and conduct experiments with real users; we make publicly available the code of the testbed and the collected dataset to facilitate further research. To our best knowledge, this is the first realistic evaluation (over a real service and with real user experiments) of the recently proposed paradigm of joint caching and recommendation systems.}
\end{abstract}

\section{Introduction}
\label{sec:intro}
Multi-access Edge Computing (MEC) is identified as one of the key technologies for 5G networks~\cite{MEC:white}. MEC architectures enable the extension of the successful paradigm of Content Delivery Networks (CDNs) and content caching to the edge of the mobile networks, thus reducing latency of content delivery and offloading of the backhaul links. However, a key difference to CDNs is that caches in MEC are located at the edge of the mobile network (e.g., base stations), and unavoidably have limited capacity and serve small --and frequently changing-- user populations~\cite{Paschos-infocom2016}. These factors, despite the advances in caching policies~\cite{Paschos-infocom2016} or delivery techniques~\cite{femto}, limit the possible gains from MEC: capacity is a tiny fraction of today's content catalogs, and traffic is highly variable; hence, a large number of user requests is for non-cached contents, i.e., not served in the edge.

A recently proposed solution for increasing the efficiency of MEC is jointly considering caching and recommending content~\cite{sch-chants-2016,chatzieleftheriou2017caching,sermpezis2018soft,giannakas-wowmom-2018,zhu2018coded,chatzieleftheriou2019jointly,costantini2019approximation,garetto2020similarity,tsigkari2020user,li2020leveraging}. Recommendation Systems (RS) are integrated in many popular services (e.g., YouTube, Netflix) and significantly affect the user demand~\cite{RecImpact-IMC10, gomez2016netflix}. \blue{Therefore, leveraging recommendations to steer content demand towards cached contents can significantly increase the cache hit ratio (network performance) and content delivery latency (user experience), even under the challenging conditions of small caches or populations in MEC.}

% \pavlos{I heavily changed the motivation of the work (due to Akis comments on the motivation/approach): removed the following strikethrough text, and added the new blue text.}
% \st{However, joint caching and recommendation requires collaboration between network operators and Content Providers (CPs). This might be challenging, due to the different scope of these entities, and the constraints of current network protocols and architectures. For example, CPs encrypt traffic (e.g., https) and do not typically share user-related information%~\cite{leguay2017cryptocache}.
% }

% \st{To bridge this gap, we propose an approach that is applicable in today's networks: the network operator leverages the information made available by the RS, and, based on this, provides \textit{independently of the CP} high-quality and cache-friendly recommendations that increase the efficiency of MEC. Specifically, we consider the YouTube service, and design a system/application that (i) obtains video relations from the YouTube API, based on which (ii) it builds extended lists of directly and indirectly related videos, and (iii) carefully steers initial recommendations --and thus user demand-- towards cached videos. These operations can take place without any tight collaboration with the CP, thus facilitating the application of joint caching and recommendation approaches by network operators (or other entities), without any need for modifications in architectures or protocols.}

\blue{As a toy example of the proposed paradigm, consider the following (depicted in Fig.~\ref{fig:na-rec-example}): Assume a user watching a video A over a streaming service, whose recommendation system would suggest to the user to watch next a video B. Also assume that video B is not locally cached or needs to be fetched from a congested link, which does not allow a high quality streaming of B (low video quality, high start up delay, rebufferings, etc.). In the proposed paradigm, the recommendation system could instead suggest to the user to watch next a video C, which is still relevant to A (e.g., C is similar with B) and can be delivered in high quality (e.g., it is stored in a MEC cache). This network-aware recommendation for video C can be a win-win situation for the network, which consumes less resources for the video delivery, and for the user, who enjoys a better streaming experience.}

\blue{Previous works generalize the above example~\cite{sch-chants-2016,chatzieleftheriou2017caching,sermpezis2018soft,giannakas-wowmom-2018,zhu2018coded,lin2018joint,song2018making,qi2018optimizing,chatzieleftheriou2019jointly,giannakas2019order,chatzieleftheriou2019joint,gupta2019effect,lin2019content,costantini2019approximation,garetto2020similarity,tsigkari2020user,li2020leveraging}, by considering
more general recommendation techniques and parameters (e.g., number and order of recommendations~\cite{giannakas2019order}), more general content delivery schemes (e.g., multiple caches~\cite{sermpezis2018soft,chatzieleftheriou2019joint}, broadcasting~\cite{lin2018joint,song2018making,lin2019content}), and more general user demand models (e.g., acceptance of recommendations~\cite{qi2018optimizing}, sequential requests~\cite{giannakas-wowmom-2018}). %Due to the multi-faceted nature of the involved problems, the design of the recommendation and/or the caching policies becomes a complex task~\cite{cite hardness results}, and to this end several optimization approaches have been attempted \cite{add refs}.
}

\begin{figure}
\centering
\includegraphics[width=1\linewidth]{./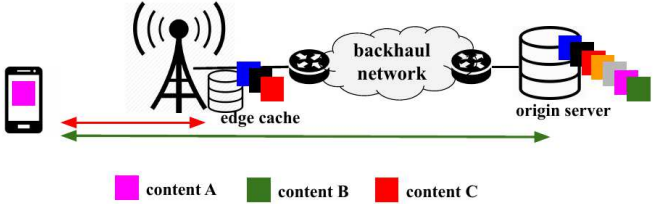}
\caption{Example of network-aware recommendations: Contents B and C are relevant to a content A currently consumed by a user. A baseline RS would recommend content B, while a network-aware RS recommends content C that can be served by the edge cache in a higher delivery quality.}
\label{fig:na-rec-example}
\end{figure}

\blue{Early findings demonstrate that the potential gains for the network performance can be significant, e.g., by increasing up to an order of magnitude the caching efficiency~\cite{sermpezis2018soft}. However, these promising results are based exclusively on evaluations on \textit{simulation environments} and mainly consider \textit{small content catalogs} (typically, a few thousands contents) of \textit{synthetic or public datasets that are not collected from real content delivery services} (e.g., MovieLens~\cite{movielens-related-dataset}). 
While the contribution of previous works to the understanding of the involved challenges, benefits and tradeoffs, is indisputable, it still remains uncertain \textit{if and how these findings would change in real settings}.} 

\blue{Deviations from such real setting aspects may affect the expected performance. For instance: (i) Real content delivery services typically have huge content catalogs (e.g., YouTube and Netflix video catalogs are reported to be in the order of petabytes), from which recommendations are selected and within which user are allowed to navigate. Considering only a tiny subset of these options for user actions may overestimate the gains. (ii) While the employed user models take into account the quality of recommendations and willingness of users to follow a ``nudged'' list of recommendations, we still lack experimental evidence whether this would indeed approximate well the real user behavior.}

\blue{In this paper, we aim to address these issues and take the next step in the evaluation of the joint network and recommendations paradigm: we propose a methodology that enables evaluation under realistic settings, apply it in a real service (YouTube), and conduct measurements and experiments with real users; to our best knowledge this is the first study of this kind in the field\footnote{\blue{A preliminary version of our work, containing some parts of this paper, appears in~\cite{kastanakis-cabaret-mecomm-2018}.}}. Specifically, our contributions are summarized as follows.}

\myitem{Methodology.} \blue{We propose a methodology that enables realistic evaluations for the joint network and recommendations paradigm (\secref{sec:methodology}). We exploit information made publicly available from the recommendation systems of content providers, and incorporate it to the existing frameworks used for the evaluation of joint network and recommendation schemes.}
\blue{In this way, we circumvent the problem of the content similarity data that is required by the majority of existing works, but is not disclosed by the content providers. In fact, we claim that detailed content/user information is not necessary, and only the output of a RS may suffice for the design of joint policies. This allows to face RSs as black boxes (without disclosing sensitive/private data and algorithms), and as a side-effect, it could enable joint caching and recommendation approaches, without requiring tight collaboration between network operators and CPs (as is considered by previous works).}

\blue{We apply the proposed approach, and design a network-aware algorithm (named \ourAlgo) that leverages available information provided by a RS, and returns cache-aware recommendations (\secref{sec:cabaret}).}

\myitem{Measurements and a realistic evaluation.} We apply our methodology and perform extensive measurements and evaluation over the YouTube service (\secref{sec:measurements}). Our results show that significant caching gains can be achieved in practice; even in conservative scenarios \blue{of the considered setup}, our approach increases the cache hit ratio by a factor of $\times$8 to $\times$10. %
%We then analytically study the problem of caching optimization under recommendations from \ourAlgo, and propose an approximation algorithm. 
% Moreover, we show that when caching is optimized under \ourAlgo recommendations, an extra $\times$2 increase in the cache hit ratio can be achieved. 
%
\blue{To our best knowledge, this is the first evaluation of a joint caching and recommendation approach in a real service and with realistic traffic.}

\myitem{Experiments with real users.} \blue{We build an experimental testbed and conduct experiments with \textit{real users} to (i) test the performance of \ourAlgo in practice, and (ii) verify to what extent the assumptions made hold in practice with real users (\secref{sec:experiments}). We are the first to test the concept of joint caching and recommendations with real user experiments. Moreover, we publish the collected dataset and open-source the code of the testbed, which is generic and enables researchers to design and conduct their own experiments either with \ourAlgo or with any other algorithm they implement.}

\blue{The experimental findings (i) are in agreement with the measured performance of \ourAlgo in~\secref{sec:measurements}, which further supports the usefulness of the proposed methodology (\secref{sec:methodology}) for realistic evaluations; (ii) validate key assumptions made in related literature, and provide useful quantitative results that can drive future models, parameter selection and assumptions; and (iii) provide valuable evidence for the feasibility and benefits of the network-aware recommendations in practice, namely, users are willing to follow ``nudged'' recommendations and they do not perceive this as a significant compromise in recommendation quality.}

\blue{Finally, we provide an overview of related work (\secref{sec:related}) and conclude our paper (\secref{sec:conclusions}).}

\section{Methodology}
\label{sec:methodology}
\subsection{Overview} \label{sec:methodology-overview}

\subsubsection{Motivation: the need for realistic evaluations.} 

\mjr{We consider a communication system where a set of contents can be delivered with lower cost for the network and/or in higher QoS. To simplify our discussion, in the remainder we refer to a caching system as in the example of Fig.~\ref{fig:na-rec-example}, where the set of cached contents can be delivered in low-cost/high-QoS. However, our methodology applies to generic communication setups, by simply assigning a cost value to each content, where the cost of delivering a content may depend on the network, the wireless channel conditions, the transmission (coded, broadcast, unicast), etc.}

\blue{The main approach in literature for the joint design of network and recommendations, assumes that a system (for recommendations and/or caching) has full knowledge of the content similarities or user preferences. Under this assumption, it considers that the ``baseline RS'' (see Fig.~\ref{fig:RS-BS}) would recommend to the user the contents with the highest similarity/relevance, while the ``network-aware RS'' (see Fig.~\ref{fig:RS-NA}) could also recommend contents with lower similarity, as soon as the similarity is above a threshold (e.g., quality of recommendations~\cite{giannakas-wowmom-2018}, user preference window~\cite{chatzieleftheriou2017caching}). This relaxation in the recommendation quality is the key behind the joint network and recommendations paradigm, since it allows to recommend contents of lower similarity/relevance but which can be delivered more efficiently and/or in higher QoS by the network.}

\begin{figure}
\centering
\subfigure[Baseline RS]{\includegraphics[width=1\columnwidth]{./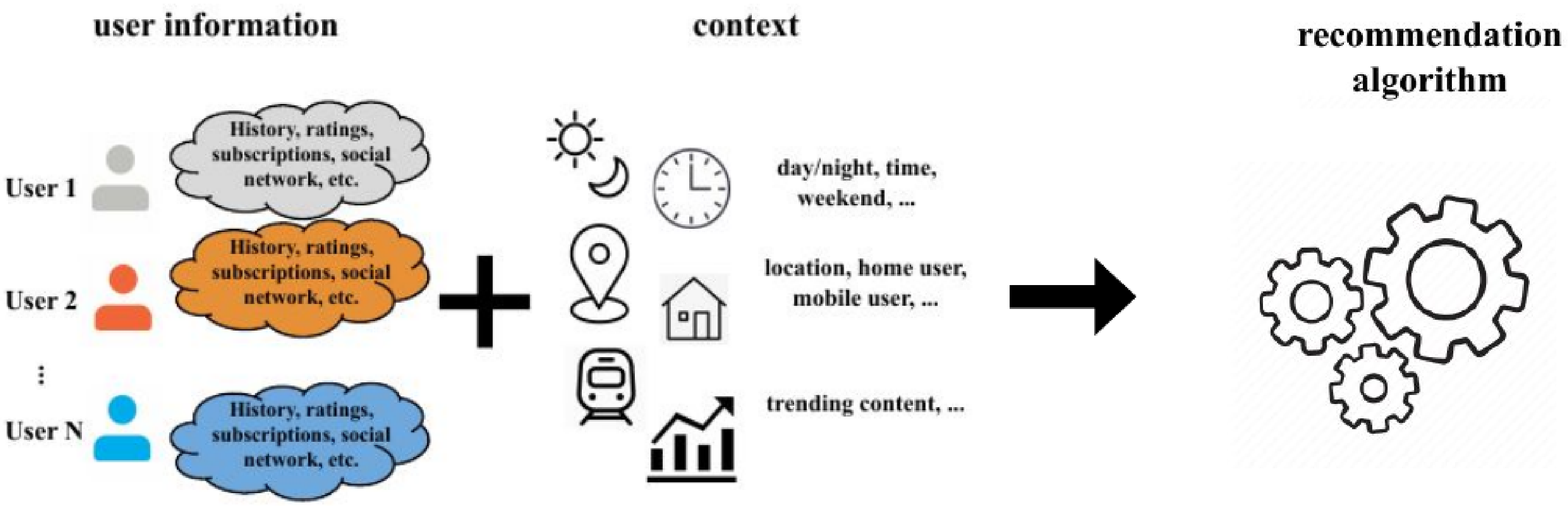}\label{fig:RS-BS}}

\subfigure[Network-aware RS (previous approaches)]{\includegraphics[width=1\columnwidth]{./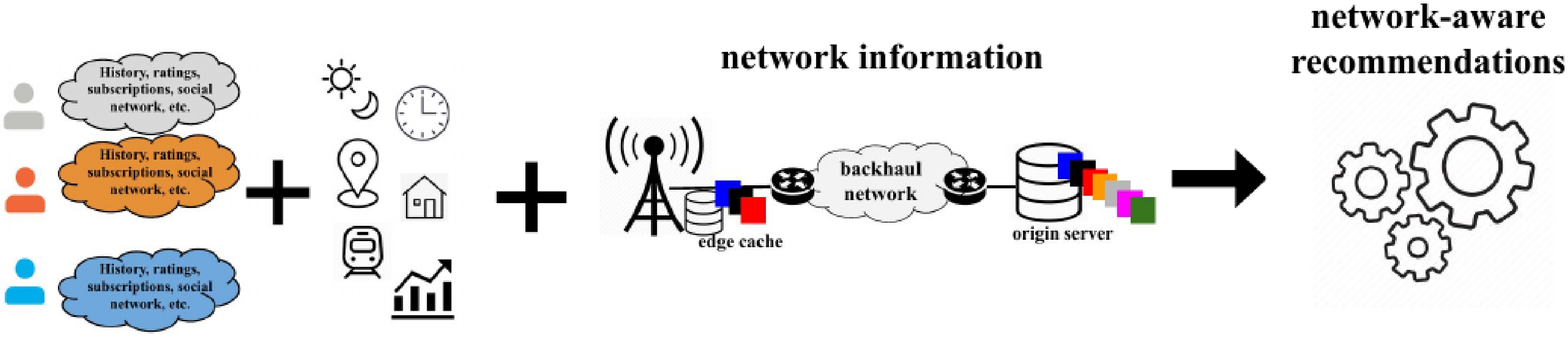}\label{fig:RS-NA}}

\subfigure[Network-aware RS (our approach)]{\includegraphics[width=1\columnwidth]{./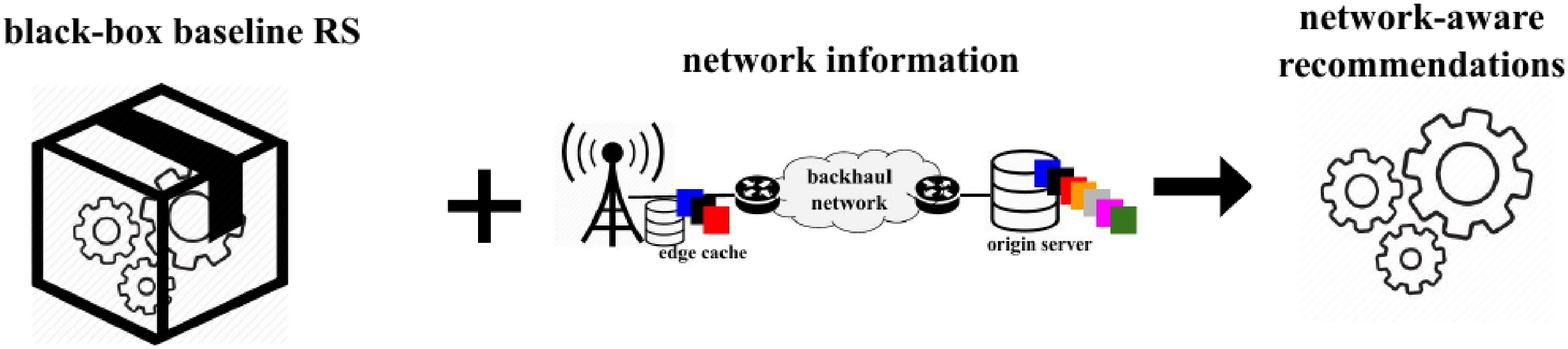}\label{fig:RS-NA-ours}}
\caption{Information that is taken into account by RS: (a) \textit{baseline RS}: detailed user information and context; (b) \textit{network-aware RS of previous approaches}: detailed user information (and context possibly) and network information; (c) \textit{the proposed approach for network-aware RS}: output of the baseline RS and network information.
}
\label{fig:RS-approaches}
\end{figure}

\blue{The full knowledge of content similarities/relevance can be a reasonable assumption when the network-aware RS is operated by the content provider, which holds all the information about contents and user activity. However, this approach comes with the following limitations.}

\blue{The information about content similarity and user preferences is not made public by content providers, due to its sensitive nature and business value. Hence, when it comes to the evaluation, previous works rely on synthetically generated datasets of content similarity or user preferences. It is unknown whether these datasets represent well real content catalogs, content similarity, user preferences, etc. In case they significantly deviate from the structure and characteristics of the real content similarity matrices (which are shown to be determinant factors for the performance~\cite{sch-chants-2016,sermpezis2018soft}), the evaluation may lead to erroneous or inaccurate findings and insights about the proposed solutions.}

\blue{Some works (e.g.,~\cite{sermpezis2018soft,giannakas-wowmom-2018,qi2018optimizing,chatzieleftheriou2019jointly,costantini2019approximation}) have considered publicly available data, such as the MovieLens dataset~\cite{movielens-related-dataset} that contains real user ratings for movies, and allow to calculate content similarities (e.g., using collaborative filtering techniques~\cite{chatzieleftheriou2019jointly}). However, this approach also deviates from real setups, as: (i) the sparsity of available data can lead to very different content similarities, (ii) the datasets do not correspond to content catalogs of real services and/or neglect the fact that content similarity/relevance is affected by time (trending content, transient interest in contents, etc.), and (iii) modern RS do not make recommendations only based on content similarity (collaborative filtering), but employ more complex mechanisms that take into account also user subscriptions, content diversity, trends, etc.~\cite{covington2016deep}.}

\subsubsection{The proposed methodology}

% \begin{figure}
% \centering
% \includegraphics[width=1\linewidth]{./}
% \caption{System overview.}
% \label{fig:system-overview}
% \end{figure}

\blue{To overcome the aforementioned limitation, we propose a methodology that leverages information made publicly available by the RS of content services, and enables realistic evaluations on the real content catalogs and operation of these services. Instead of using the ``raw'' information about content similarity or user preferences, i.e., the \textit{input} of the baseline RS which is typically not publicly available, our methodology uses the \textit{output} of the baseline RS that is already provided by the content service. Figure~\ref{fig:RS-approaches} presents the conceptual difference between our approach (Fig.~\ref{fig:RS-NA-ours}) and previous network-aware approaches  (Fig.~\ref{fig:RS-NA}).}

\begin{figure}
\centering
\includegraphics[width=1\columnwidth]{./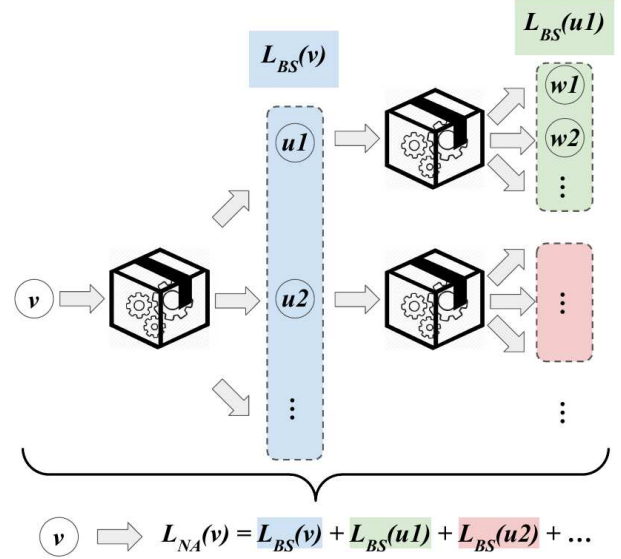}
\caption{Obtaining a list of ``relaxed'' recommendations by triggering the baseline RS in a breadth-first-search way.}
\label{fig:RS-cabaret-lists}
\end{figure}

\blue{In particular, our methodology is detailed below and depicted in Fig.~\ref{fig:RS-cabaret-lists}. Consider a content service (video streaming, online radio, etc.) that uses a baseline RS. When a user consumes a content $v$, the baseline RS provides a list $\mathcal{L}_{BS}(v)$ of contents related to $v$ and to the current session of the user (the subscript ``BS'' denotes the baseline RS). The recommendations in $\mathcal{L}_{BS}(v)$ are the most relevant according to the (network-oblivious) baseline RS. A network-aware RS (denoted with subscript ``NA'') would need a larger set of related contents $\mathcal{L}_{NA}(v)\supseteq \mathcal{L}_{BS}(v)$, including also less relevant contents, in order to be able to exploit a relaxation in the recommendation quality for the favor of network-friendly recommendations. We construct such a set of ``relaxed'' recommendations $\mathcal{L}_{NA}(v)$, by recursively triggering the baseline RS in a Breadth-First Search (BFS) manner: we start by obtaining the list $\mathcal{L}_{BS}(v)$, then $\forall  u\in\mathcal{L}_{BS}(v)$ we request from the baseline RS the lists  $\mathcal{L}_{BS}(u)$ and append them to $\mathcal{L}_{NA}(v)$, and so on. In the end of the process, the list $\mathcal{L}_{NA}(v)$ contains contents directly and indirectly related to $v$, which is the key information for joint network and recommendation algorithms.}

\blue{Hence, compared to previous approaches, the main advances of the proposed method are: (i) it can be applied to a real service, without being limited to samples or fractions of the content catalog; (ii) it leverages the real output of the baseline RS, thus avoiding simplifications for the baseline recommendation mechanisms  (e.g., naive item-item similarity, or collaborative filtering); (iii) it is based only on publicly available information and considers the baseline RS as a black-box, which enables to apply the approach on any real service, without requiring access to private/sensitive data.}

\subsubsection{Benefits beyond realistic evaluations}

\blue{Apart from enabling realistic evaluations, we believe that the proposed approach brings additional benefits for the joint network and recommendations paradigm, which we briefly discuss below.}

\myitem{Modular architectures.} \blue{The black-box approach yields a modular design, minimizing the dependence between network and RS components, e.g., the RS can be replaced or upgraded while the remainder of the system remains unchanged. This enables modular system architectures with higher robustness and increased privacy. Furthermore, it brings higher scalability; in contrast, most previous approaches require the knowledge of the entire catalog, which is huge in practice. Our initial investigation shows that while this may come at a cost of performance (i.e., knowing the entire catalog could lead to optimal performance), a good trade-off between performance and scalabitity is feasible in practice (Appendix~\ref{sec:results-greedy}).} 

\myitem{Techno-economic feasibility.} \mjr{The existing approaches in} joint network and recommendations paradigm mainly assume that the content provider (CP) and the network operator (NO) are the same entity or closely collaborate and exchange information. The convergence between CPs and NOs is enabled due to the architectural developments of MEC and RAN Sharing~\cite{liang2015wireless}, while CPs increasingly deploy their own infrastructure to bring content closer to the user, e.g., Netflix OpenConnect, Google Global Cache, or bring their equipment inside the network of NOs~\cite{akamai-mobile-optim}. However, this collaboration requires some investment in infrastructure and technology, and changes in business strategy. In this context, our modular/black-box approach could lower the barrier for techno-ecomonic feasibility of the joint network and recommendations paradigm: it does not rely on the exchange of private/sensitive information, and thus can (i) be applicable even when the network and the RS are not controlled by the same entity, and (ii) cope with potential tussles between CPs and NOs.

\subsection{Network-aware Recommendations with \ourAlgo}\label{sec:cabaret}

\blue{We now proceed to apply the proposed methodology in the real service of YouTube. To this end, we design a network-aware recommendation algorithm (\ourAlgo) that leverages information provided by the YouTube RS (\secref{sec:detailed-algo}), and discuss the related design implications (\secref{sec:tune-algo}).}

\textit{Remark:} We would like to stress that while we focus on the YouTube service, our approach and the \ourAlgo algorithm are generic and can be applicable to other video/radio services. \mjr{For example, the majority of popular content services provide public APIs, such as Vimeo~\cite{vimeo-api}, Twitch~\cite{twitch-api}, Dailymotion~\cite{dailymotion-api}, Spotify~\cite{spotify-api}, or in case APIs are not available, indirect methods (e.g., web-based parsing/crawling methods) exist for retrieving content recommendations, e.g., for Netflix~\cite{neflix-unofficial-search} or for Facebook through the \textit{Tracking Exposed} project~\cite{tracking-exposed}.}

\subsubsection{The \ourAlgo Algorithm}\label{sec:detailed-algo}

\myitem{\ourAlgo overview}. \blue{\ourAlgo receives information about content relations from the YouTube RS through its API. In particular, when a user watches a video $v$, \ourAlgo requests from the YouTube API a list of video IDs $\mathcal{L}$ related to $v$, i.e., the videos that YouTube would recommend to the user. Then it requests the related video IDs for every video in $\mathcal{L}$ and adds them in the end of $\mathcal{L}$, and so on, in a Breadth-First Search (BFS) manner. In the end of the process, the list $\mathcal{L}$ contains IDs of videos directly and indirectly related to $v$; of these videos, the top $N$ that are cached and/or highly related to $v$ are finally recommended to the user. %The list $\mathcal{L}$ is (i) much larger than the list of videos recommended by YouTube, and thus it is more probable to contain cached videos that are related to $v$, and (ii) built based on video relations provided by YouTube itself, which satisfies a high quality of recommendations.
An example is depicted in Fig.~\ref{fig:rec-algo}.}

\begin{figure}
\centering
\includegraphics[width=1\linewidth,height=0.6\linewidth]{./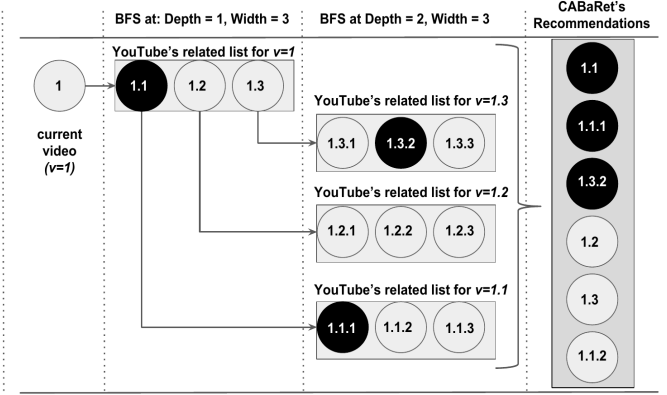}
\caption{CABaRet: example with $D_{BFS}=2$, $W_{BFS}=3$, $N=6$. Cached videos are denoted with black color.}
\label{fig:rec-algo}
\end{figure}

\myitem{Input.} The recommendation algorithm receives as input:
\begin{itemize}[leftmargin=*]
\item $v$: the video ID (or URL) which is currently watched
\item $N$: the number of videos to be recommended
\item $\mathcal{C}$: the list with the IDs of the cached videos
\item $D_{BFS}$: the depth to which the BFS proceeds
\item $W_{BFS}$: the number of related videos that are requested \textit{per content} from the YouTube API (i.e., the ``width'' of BFS)
\end{itemize}

\myitem{Output.} The recommendation algorithm returns as output:
\begin{itemize}[leftmargin=*]
\item $\mathcal{R}$: ordered list of $N$ video IDs to be recommended.
\end{itemize}
%\vspace{\baselineskip}
\myitem{Workflow.} \ourAlgo searches for videos related to video $v$ in a BFS manner as follows (\textit{line 1} in Algorithm~\ref{alg:recommendation}). Initially, it requests the $W_{BFS}$ videos related to $v$, and adds them to a list $\mathcal{L}$ in the order they are returned from the YouTube API. For each video in $\mathcal{L}$, it further requests $W_{BFS}$ related videos, as shown in Fig.~\ref{fig:rec-algo}, and adds them in the end of $\mathcal{L}$. It proceeds similarly for the newly added videos, until the depth $D_{BFS}$ is reached; e.g., if $D_{BFS}=2$, then $\mathcal{L}$ contains $W_{BFS}$ video IDs related to $v$, and $W_{BFS}\cdot W_{BFS}$ video IDs related to the related videos~of~$v$.

Then, \ourAlgo searches for video IDs in $\mathcal{L}$ that are also included in the list of cached videos $\mathcal{C}$ %i.e., $c \in \mathcal{L} \cap \mathcal{C}$)
and adds them to the list of video IDs to be recommended $\mathcal{R}$, until all IDs in $\mathcal{L}$ are explored or the list $\mathcal{R}$ contains $N$ video IDs, whichever comes first (\textit{lines 4--9}).
%Then, in \red{lines 4--9}, it searches in the ordered list $\mathcal{L}$ (starting from the first element in the list) for video IDs that are also included in the list of cached videos $\mathcal{C}$. When a video $c \in \mathcal{L} \cap \mathcal{C}$ is found, it is added in the list of video IDs to be recommended $\mathcal{R}$. This continues until all the IDs in $\mathcal{L}$ are explored or the list $\mathcal{R}$ contains $N$ video IDs, whichever comes first. 
If after this step, $\mathcal{R}$ contains less than $N$ video IDs, $N-|\mathcal{R}|$ video IDs from the head of the list $\mathcal{L}$ are added to $\mathcal{R}$; these IDs correspond to the top $N-|\mathcal{R}|$ non-cached videos that are directly related to video $v$ (\textit{lines 10--15}). \mjr{\textit{Remark:} the operations in lines 4--9 and 10--15 could be merged in an implementation to slightly reduce complexity.}
%After this step, in \red{lines 10--15}, if $\mathcal{R}$ contains less than $N$ video IDs, $|\mathcal{R}|-N$ video IDs from the head of the list $\mathcal{L}$ are added to $\mathcal{R}$; these IDs correspond to the the top $|\mathcal{R}|-N$ non-cached videos that are directly related to video $v$.

%Finally, the ordered list $\mathcal{R}$ of the IDs of the videos to be recommended is returned (\red{line 16}).

\myitem{Extension: different costs per video.} \mjr{In more generic setups, each video $i$ may have a different delivery cost $c_{i}$. \ourAlgo can be easily modified for this case, by selecting the $N$ video with the lowest costs $c_{i}$ (and prioritizing videos found earlier in the BFS among those with equal costs).}

\begin{algorithm}
\begin{small}
\begin{algorithmic}[1]
\caption{\\~\textit{CABaRet}: Cache-Aware \& BFS-related  Recommendations}\label{alg:recommendation}
\Statex {$Input: v, N, \mathcal{C}, D_{BFS}, W_{BFS}$} 
%\Statex \Comment $v$: initial video request
%\Statex \Comment $N$: number of contents to recommend
%\Statex \Comment $\mathcal{C}$: set of cached contents
%\Statex \Comment $D_{BFS}$ depth of BFS
%\Statex \Comment $W_{BFS}$ width of BFS
\State $\mathcal{L}\gets BFS(v,D_{BFS},W_{BFS})$ \Comment ordered set of video IDs
\State $\mathcal{R}\gets \emptyset$ \Comment ordered set of video IDs to be recommended
\State $i\gets 1$
\For {$c\in \mathcal{L}$}
	\If {$i\leq N$ and $c\in \mathcal{C}$}
		\State $\mathcal{R}.{append}(c)$
		\State $i\gets i+1$
	\EndIf
\EndFor
\For {$c\in \mathcal{L}\setminus \mathcal{R}$}
	\If {$i\leq N$}
    	\State $\mathcal{R}.{append}(c)$
		\State $i\gets i+1$
    \EndIf
\EndFor

%\While {$i\leq N$ and $\mathcal{L}\setminus \mathcal{R} \neq \emptyset$}
%	\For {$c\in \mathcal{L}\setminus \mathcal{R}$}
%		\State $\mathcal{R}.{append}(c)$
%		\State $i\gets i+1$
%	\EndFor
%\EndWhile
\State $return~~\mathcal{R}$
\end{algorithmic}
\end{small}
\end{algorithm}

\subsubsection{Implications and Design Choices}\label{sec:tune-algo}
%\subsection{Tuning the Recommendation Algorithm}\label{sec:tune-algo}

\myitem{High-quality recommendations.} Using the baseline RS (here, the YouTube recommendations) ensures strong relations between videos that are directly related to $v$ (i.e., BFS at depth 1). Moreover, typically the baseline RS provides only a subset of the relevant recommendations to the user; for example, while the YouTube RS finds hundreds of videos highly related to $v$, only a few of them (e.g,. 5 or 20, depending on the end device) are finally communicated to the user~\cite{covington2016deep}. The rationale behind our methodology and \ourAlgo is to explore the related videos that are not communicated to the user. To this end, based on the fact that related videos are similar and have high probability of sharing recommendations (i.e., if video $a$ is related to $b$, and $b$ to $c$, then it is probable that $c$ relates to $a$)~\cite{survey-collaborative-filtering,amazon-recommendations}, \ourAlgo tries to infer these latent video relations through BFS. Hence, videos found by BFS in depths $>1$ are also (indirectly) related to $v$ and probably good recommendations as well.

To further support the above claim, we collect and analyze datasets of related YouTube videos. Specifically, we consider the set of most popular  videos, denoted as $\mathcal{P}$, in a region, and for each $v\in\mathcal{P}$ we perform BFS by requesting the list of related videos (similarly to \textit{line 1} in \ourAlgo). We use as parameters $W_{BFS}=\{10,20,50\}$ and $D_{BFS}=2$, i.e., considering the directly related videos (depth 1) and indirectly related videos with depth 2. We denote as $\mathcal{R}_{1}(v)$ and $\mathcal{R}_{2}(v)$ the set of videos found at the first and second depth of the BFS, respectively. We calculate the fraction of the videos in $\mathcal{R}_{1}(v)$ that are also contained in $\mathcal{R}_{2}(v)$, i.e., $I(v) = \frac{| \mathcal{R}_{1}(v)  \cap \mathcal{R}_{2}(v)|}{|\mathcal{R}_{1}(v)|}$. High values of $I(v)$ indicate a strong similarity of the initial content $v$ with the set of indirectly related contents at depth 2.

Table~\ref{table:I-v} shows the median values of $I(v)$, over the $|\mathcal{P}|=50$ most popular contents in the region of Greece (GR), for different BFS widths. As it can be seen, $I(v)$ is very high for most of the initial videos $v$. For larger values of $W_{BFS}$, $I(v)$ increases, and when we fully exploit the YouTube API capability, i.e., for $W_{BFS}$=50, which is the maximum number of related videos returned by the YouTube API, the median value of $I(v)$ becomes larger than $0.9$. Finally, we measured the $I(v)$ in other regions as well, and observed that even in large (size/population) regions, the $I(v)$ values remain high, e.g., in the United States (US) region, $I(v)$=0.8 for $W_{BFS}$=50.
\begin{table}[h]
\centering
\caption{$I(v)$ vs. $W_{BFS}$ for the region of GR% (left table), and $I(v)$ vs. region for $W_{BFS}=50$ (right table)
.}
%\vspace{-\baselineskip}
\label{table:I-v}
\begin{tabular}{|cccc|%c|cccc|
}
\hline
{$W_{BFS}:$}	&{10}&{20}&{50}%&{}&{Region:}	&{GR} & {FR} & {US}
\\
{$I(v):$}	&0.70&0.85&0.92%&{}&{$I(v)$:}	&0.92 & 0.69 & 0.60
\\
\hline
\end{tabular}
\vspace{-\baselineskip}
\end{table}

% NO TIME for this comment, deactivating it
%\vassilis{Could we also check how frequently the top-X videos, that are downloaded in the cache, are changing content- or order-wise? What I do not know if whether the ``quality'' of the cache recommendations is deteriorating as long as we do not update it within certain frequency intervals. Meaning: maybe the CHR is fine at time t, but after an hour the results vary wildly. We need to check the stability of figures 5-8!}

%\vassilis{Update on comment above: the contents change significantly on a daily basis (most popular and their D1-related, so the cache needs to be refreshed daily (at least).}
%\pavlos{Check if the values of the quantity I(v) remain similar, despite the fact that the most popular change. This would show that our results are robust wrt to YouTube traffic differentiations.} \vassilis{Not able to address this due to lack of data. Study over time is moved in future work.}

% \begin{figure}\centering
% \begin{minipage}[t]{0.47\linewidth}
% \includegraphics[width=1\columnwidth]{./}
% \caption{\blue{Distribution of $I(v)$ values for different $W_{BFS}$.}}
% \label{fig:intersection-vs-W}
% \end{minipage}
% %
% \hspace{0.03\linewidth}
% %
% \begin{minipage}[t]{0.47\linewidth}
% \centering
% \includegraphics[width=1\columnwidth]{./}
% \caption{\pavlos{Here we could have plots for 3-4 different regions, for $W_{BFS}=50$} \vassilis{also we need to refer to it somewhere in the text}}
% \label{fig:}
% \end{minipage}
% \end{figure}

\myitem{Tuning \ourAlgo.} 
\mjr{Typically, users prefer videos in the top of the recommendation list, hence, \ourAlgo puts in the top of the list $\mathcal{R}$ the cached videos found in the BFS\footnote{\mjr{Nevertheless, if for a service the patterns of users preferences is different (e.g., preference to the bottom of recommendation list), \ourAlgo could be tuned accordingly.}}.} 

Moreover, the parameters $D_{BFS}, W_{BFS}$ can be tuned to achieve a desired performance, e.g., in terms of probability of recommending a cached or highly related video.
For large $D_{BFS}$, the similarity between $v$ and the videos \textit{at the end of the list} $\mathcal{L}$ is expected to weaken, while for small $D_{BFS}$ the list $\mathcal{L}$ is shorter and it is less probable that a cached content is contained in it. Hence, the parameter $D_{BFS}$ can be used to achieve a trade-off between quality of recommendations (small $D_{BFS}$) and probability of recommending a cached video (large $D_{BFS}$). The number of related videos requested per content $W_{BFS}$, can be interpreted similarly to $D_{BFS}$. A small $W_{BFS}$ leads to considering only top recommendations per video, while a large $W_{BFS}$ leads to a larger~list~$\mathcal{L}$. \blue{For the size of the list $\mathcal{L}$ it holds that 
\[\textstyle|\mathcal{L}|\leq \sum_{n=1}^{D_{BFS}}(W_{BFS})^{n}\]
where the equality holds when all videos found by the BFS are unique.}

\textit{Remark:} YouTube imposes quotas on the API requests per application per day, which prevents API users from setting the parameters $W_{BFS}$ and $D_{BFS}$ to arbitrarily large values. However, even with small number of API requests (for related contents), the exploration returns a large number of \textit{unique} videos. Figure~\ref{fig:retVSreq} shows how many relations are requested for parameters $W_{BFS}\in \{1,...,50\}$ and $D_{BFS}=2$ (x-axis), versus the size of the returned list $|\mathcal{L}|$ (y-axis). Two settings are considered, where the BFS starts from a top trending YouTube video from the YouTube ``front page'' or from a video searched through the ``search bar'' (see details in~\secref{sec:measurements}). %In both cases the list $\mathcal{L}$ of unique video IDs returned, increases linearly or almost linearly with the number of explored relations, thus indicating that the BFS achieves an efficient exploration (large $|\mathcal{L}|$) while maintaining a high-quality of recommendations (Table~\ref{table:I-v}). 
\mjr{In both cases, and as already suggested by the results of Table~\ref{table:I-v}, the BFS discovers several duplicates. On the one hand, this indicates a high-quality of recommendations. On the other hand, the number of unique video IDs in the list $\mathcal{L}$, increases linearly or almost linearly with the number of explored relations, thus indicating that the BFS achieves an efficient exploration (large $|\mathcal{L}|$).}

\begin{figure}
\centering
    \subfigure[``Front Page'' video demand]{\includegraphics[width=0.49\columnwidth]{./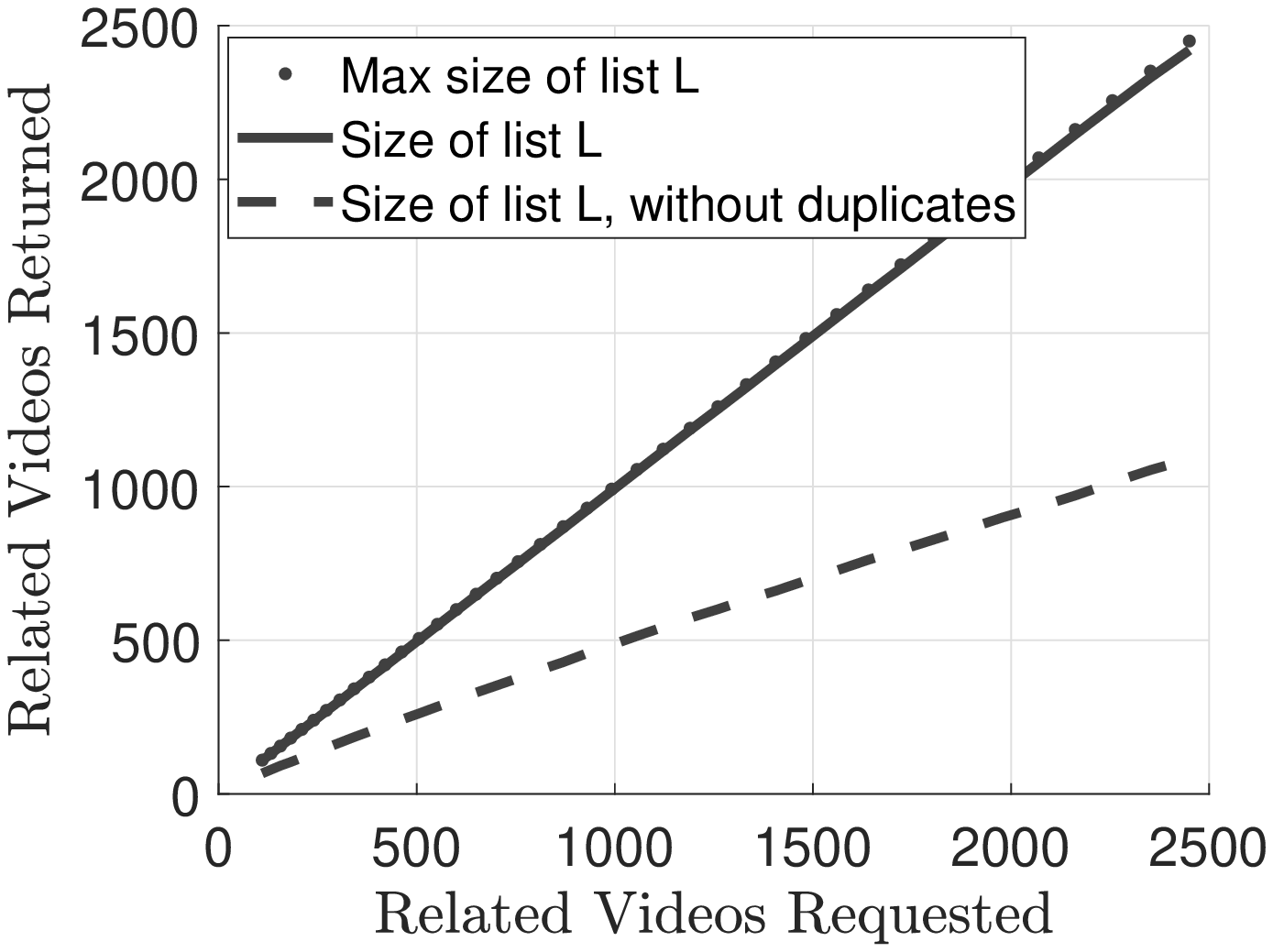}}
\subfigure[``Search Bar'' video demand]{\includegraphics[width=0.49\columnwidth]{./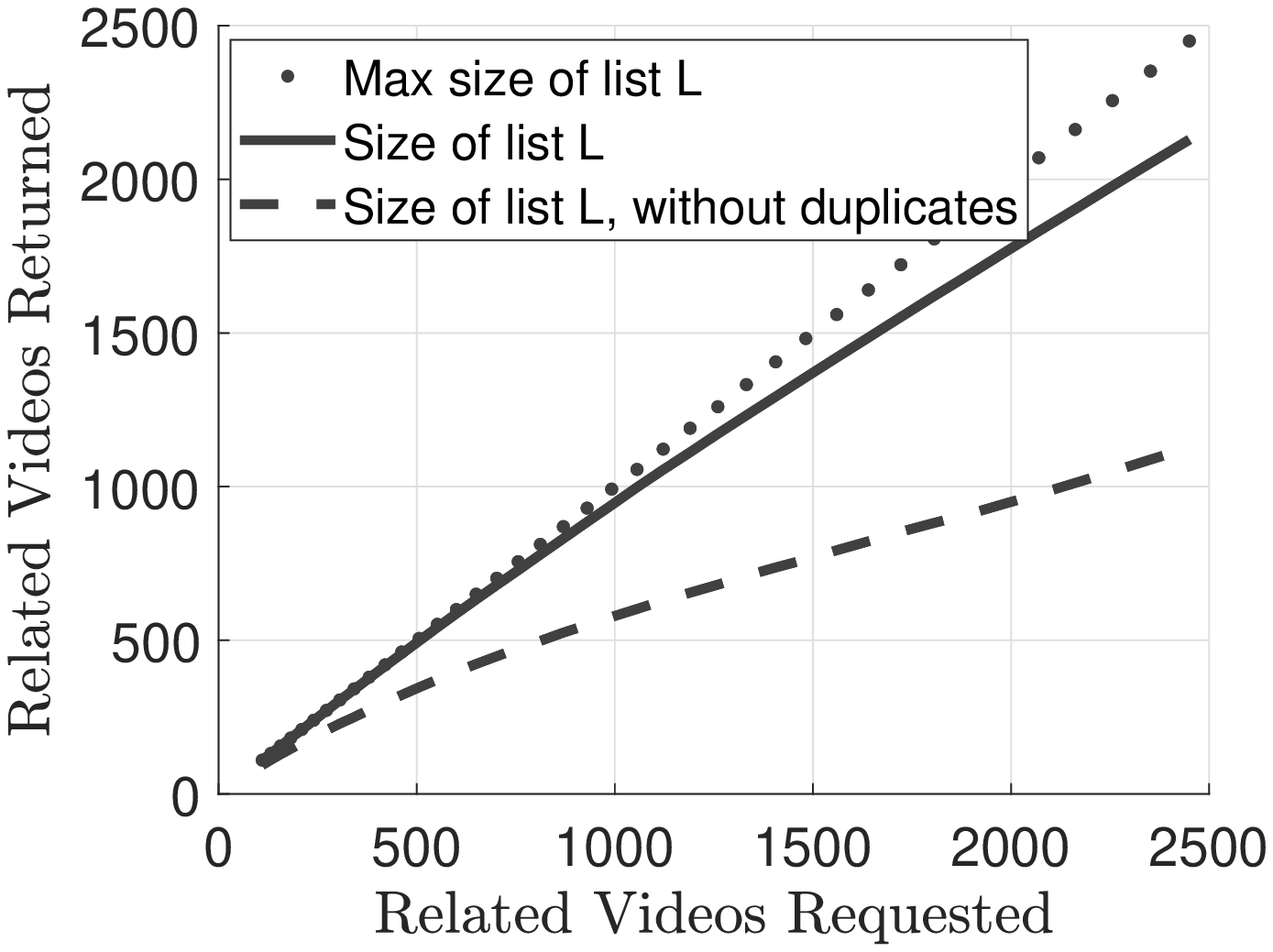}}
\caption{Num. of requested related videos VS Num. of returned related videos from the API.
}
\label{fig:retVSreq}
\end{figure}

\mjr{Finally, $N$ can be selected to fit different device or application settings (e.g., browser or mobile app), but also affects the performance; e.g., if $N$ is small, the user has a few options, which can further promote cached videos but decrease the quality of recommendations.}

In practice, \ourAlgo can be fine-tuned through experimentation with real users, e.g., A/B testing iterations, which is a common approach for tuning recommendation systems~\cite{covington2016deep}.

\myitem{\mjr{Performance modeling.}} \mjr{The performance of \ourAlgo can be measured by the fraction of requests made for the cached videos (i.e., in our setup, the \textit{cache hit ratio} or \textit{CHR}). This depends on how many cached videos are recommended to the user and the probability a user to select one of them; in practice, these quantities are intertwined and depend on complex user demand patterns. In the following sections, we conduct extensive measurements and experiments to quantify the achieved performance in the YouTube service and under realistic user demand patterns. However, here, we also provide an analytical model to predict the achieved CHR, which can be applied to any service and can be useful to obtain initial performance estimations (e.g., before proceeding to measurements for a more detailed evaluation).}

\mjr{Let us denote the probability that a user selects a video at the $i^{th}$ position ($i=1,...,N$) of the recommendation list as $p_{i}$. Also let $M$ be the number of cached contents that is found in the BFS, i.e.,  $M=|\mathcal{L}\cap\mathcal{C}|$, and thus the number of cached videos in the recommendation list of \ourAlgo is $\max\{M,N\}$. Then, the fraction of requests for cached videos (CHR) will be $\sum_{i=1}^{\max\{M,N\}}p_{i}$, and taking the expectation over $M$, gives:
\[\textstyle CHR=\sum_{m=1}^{|\mathcal{L}|}\left(\sum_{i=1}^{\max\{m,N\}} p_{i}\right) \cdot P\{M=m\}\]
where $M$ follows a Binomial distribution with $|\mathcal{L}|$ number of trials and success probability $q_{C}$, where $q_{C}$ is the probability that a recommendation is for a cached content\footnote{\mjr{Typically, (i) the most popular contents are cached and (ii) recommendations have bias towards popular contents (popularity bias~\cite{steck2011item,abdollahpouri2017controlling,vall2019order}), which leads to high $q_{C}$, and thus high CHR.}}.}

\mjr{For the most common case of $p_{i}\geq p_{j}$ for $i<j$ (i.e.,  users preference is higher for top recommendations), the inner sum in the above expression is a concave function of $m$. Thus,
Jensen's inequality allows us to upper bound the CHR by a simpler expression involving only the 
% Taking into account that the inner sum is a function of the variable $m$, which is concave if $p_{i}\geq p_{j}$ for $i<j$ (i.e., when users preference is higher for top recommendations), and applying Jensen's inequality gives an expression involving only the 
%
mean number of cached contents found by the BFS $\bar{M}$:~\footnote{\mjr{If user preferences are for recommendations at the end of the list ($p_{i}\leq p_{j}$ for $i<j$), Jensen's inequality gives $CHR\geq \sum_{i=1}^{\max\{\lfloor \bar{M} \rfloor,N\}} p_{i}$}}
$CHR\leq \sum_{i=1}^{\max\{\lceil \bar{M} \rceil,N\}} p_{i}%~\hspace{0.5cm}~ \text{for}~p_{i}\geq p_{j} ~,\forall~ i<j
$. This bound is tight for our actual measurement results in Section~\ref{sec:measurements}.
}

\mjr{Despite the assumptions made in the model (e.g., independence between $p_{i}$ and the set of recommended videos), it can be generalizable to any service (e.g., to short-video services having considerably different user demand patterns than YouTube~\cite{li2020leveraging,zhang2019challenges_short_video}) given general user demand statistics, i.e., $q_{C}$ and $p_{i}$.% and provide initial estimations for the expected performance of \ourAlgo .
}

\myitem{Caching optimization under \ourAlgo.} \blue{\ourAlgo receives as input a list of cached videos $\mathcal{C}$ (or, more general, videos that can be delivered by the network in high quality) and returns cache-aware recommendations to increase the caching efficiency. Depending on the considered scenario, it may be possible to control the list $\mathcal{C}$ as well. Carefully selecting the contents in the list $\mathcal{C}$ can lead to further increase of the caching efficiency~\cite{chatzieleftheriou2019jointly,costantini2019approximation}. Under \ourAlgo recommendations it is possible to design the caching policy as well, so that it further increases the cache hit ratio as we showed in our preliminary work~\cite{kastanakis-cabaret-mecomm-2018}. While a detailed investigation is out of the scope of this paper, we provide in Appendix~\ref{appendix:joint-cache-rec} a formulation of the optimization problem, an approximation algorithm, as well as evaluation results for the extra increase that can be achieved by jointly selecting the caching policy under \ourAlgo.}

% \section{System Overview}
% \label{sec:overview}
% \input{System}

% \section{Cache-Aware Recommendations}
% \label{sec:recommendation}
% \input{RecAlgo}

\section{Measurements and Evaluation}
\label{sec:measurements}
%In this section, w
Using the proposed methodology and the \ourAlgo algorithm, we conduct \blue{extensive} measurements and experiments over the YouTube service\footnote{Our experiments and use of the YouTube API conform to the YouTube terms of service \url{https://www.youtube.com/static?template=terms}.}, to investigate the performance (in terms of cache hit ratios) of network-aware recommendations in MEC scenarios. The setup of the scenarios is presented in \secref{sec:experiments-setup}, and the results in~\secref{sec:results} \blue{and~\secref{sec:demand-google} for two video demand types% and~\secref{sec:results-greedy} for greedy caching
.}

\subsection{Setup}
\label{sec:experiments-setup}

%\subsection{The YouTube API}
\myitem{The YouTube API} provides a number of functions to retrieve information about videos, channels, user ratings, etc. In our measurements, we request %(a) the most popular videos in a region (max. 50), and (b) the list of related videos (max. 50) for a given video.
the following information:
\begin{itemize}
\item the most popular videos in a region (max. 50)%, e.g., \red{[@Savvas: is there any concise command we could add here?]}
\item the list of related videos (max. 50) for a given video
\end{itemize}
\textit{Remark}: In the remainder, we present results %collected during March 2018, 
for the region of Greece (GR). Nevertheless, our insights hold also in the other regions we tested, and, indicatively, we briefly state results for the region of United States (US).

% %\red{[Write how much quota we have and to how many requests for related videos they correspond]}

%\subsection{Measurement Scenarios}
%We consider the following experimental scenario.
%over which we conduct the measurements and experiments, 
%as follows.% The setup of the scenarios is as follows.

\myitem{Caching.} We assume a MEC cache storing the most popular videos in a region. Unless otherwise stated, we populate the list of cached contents with the top $C$ %(C=$|\mathcal{C}|$)
video IDs returned from the YouTube API.

\myitem{Recommendations.} We consider two classes of scenarios with (i) YouTube and (ii) \ourAlgo recommendations. In both cases, when a user enters the UI, the $50$ most popular videos in her region are recommended to her (as in YouTube's front page). Upon watching a video $v$, a list of $N=20$ videos is recommended to the user; the list is (i) composed of the top $N$ directly related videos returned from the YouTube API (YouTube scenarios), or (ii) generated by \ourAlgo with parameters $N$, $W_{BFS}$ and $D_{BFS}$ (\ourAlgo scenarios).
%the list of $N$ contents generated by \ourAlgo is recommended to her.% with parameters $\{v,N,W_{BFS},D_{BFS}\}$.

\myitem{Video Demand.} In each experiment, we assume a user that enters the UI and \blue{selects an initial video to watch in one of the following ways: (a) ``front-page recommendations'': the user selects to watch one of the initially recommended (i.e., $50$ most popular) videos recommended in the front page; or (b) ``search bar'': the user types in the search bar a keyword of her interest, and selects one of the returned video recommendations. These two types of initial requests represent the two most common ways of user behavior (note that the former captures also trending videos selections)~\cite{RecImpact-IMC10}. We present the results for each of the aforementioned initial video demand types separately, in~\secref{sec:results} and~\secref{sec:demand-google}, respectively; the former is expected to have a more concentrated demand among the most popular videos (and thus, higher CHR, since those are assumed to be cached), while the latter a more varying demand that stresses the caching system.} 

\blue{After the initial video}% . Then
, the system recommends a list of $N$ videos ($r_{1},r_{2},...,r_{N}$), and the user selects with probability $p_{i}$ to watch $r_{i}$ next. We set the probabilities $p_{i}$ to depend on the order of appearance --and not the content--
and consider \textit{uniform} ($p_{i}=\frac{1}{N}$) and \textit{Zipf} ($p_{i}\sim\frac{1}{i^{\alpha}}$) scenarios; the higher the exponent $\alpha$ of the Zipf distribution, the more preference is given by the user to the top recommendations (user preference to top recommendations has been observed in YouTube traffic~\cite{cache-centric-video-recommendation}).

% \myitem{Video Demand.} In each experiment, we assume a user that enters the UI and watches one of the initially recommended (i.e., $50$ most popular) videos at random. Then, the system recommends a list of $N$ videos ($r_{1},r_{2},...,r_{N}$), and the user selects with probability $p_{i}$ to watch $r_{i}$ next. We set the probabilities $p_{i}$ to depend on the order of appearance --and not the content--
% and consider \textit{uniform} ($p_{i}=\frac{1}{N}$) and \textit{Zipf} ($p_{i}\sim\frac{1}{i^{\alpha}}$) scenarios; the higher the exponent $\alpha$ of the Zipf distribution, the more preference is given by the user to the top recommendations (user preference to top recommendations has been observed in YouTube traffic~\cite{cache-centric-video-recommendation}).

\subsection{Results: ``Front-Page'' Video Demand}\label{sec:results}
\subsubsection{Single Requests}\label{sec:results-single}

We first consider scenarios of single requests (similarly to~\cite{sermpezis-sch-globecom,chatzieleftheriou2017caching}). In each experiment $i$ ($i=1,...,M$) a user watches one of the top popular videos, let $v_{1}(i)$, and then follows a recommendation and watches a video $v_{2}(i)$. We measure the Cache Hit Ratio (CHR), which we define as the fraction of the \textit{second requests} of a user that are for a cached video (since the first request is always for a cached --top popular-- video):
%. Let the second request of an experiment $i$ be for a content $v_{2}(i)$, then CHR is given by 
\begin{equation}\label{eq:chr-single-request}
\textstyle CHR = \frac{1}{M}\cdot \sum_{i=1}^{M} \mathbb{I}_{v_{2}(i)\in \mathcal{C}}
\end{equation}
where $\mathbb{I}_{v_{2}(i)\in \mathcal{C}} =1$ if $v_{2}(i)\in \mathcal{C}$ and $0$ otherwise, and $M$ the number of experiments\footnote{We considered all possible experiments on the collected dataset.}.
% $
% \mathbb{I}_{v_{1}(i)\in C} = \left\{ 
% \begin{tabular}{ll}
% 1&, if $v_{1}(i)\in C$\\
% 0&, otherwise
% \end{tabular}
% \right.
% $%\end{equation*}

%\subsection{Results}
\myitem{CHR vs. BFS parameters.} Fig.~\ref{fig:chr-vs-bfs-param} shows the CHR achieved by \ourAlgo under various parameters, along with the CHR under regular YouTube recommendations, when caching all the most popular videos ($|\mathcal{C}|$=50). The efficiency of caching significantly increases with \ourAlgo, even when only directly related contents are recommended ($D_{BFS}$=1), i.e., \textit{without loss in recommendation quality}. Just reordering the list of YouTube recommendations (as suggested in~\cite{cache-centric-video-recommendation}), brings gains when $p_{i}$ is not uniformly distributed. However, the added gains by our approach are significantly higher. As expected, the CHR increases for larger $W_{BFS}$ and/or $D_{BFS}$; e.g., \ourAlgo for $W_{BFS}$=50 and $D_{BFS}$=2, achieves 8 to 10 times %$\times8-\times10$ 
higher CHR than regular YouTube recommendations. Also, the CHR increases for more skewed $p_{i}$ distributions, since top recommendations are preferred and \ourAlgo places cached contents at the top of the recommendation list.

\begin{figure}\centering
\begin{minipage}[t]{0.46\linewidth}
\centering
\includegraphics[width=1\columnwidth]{./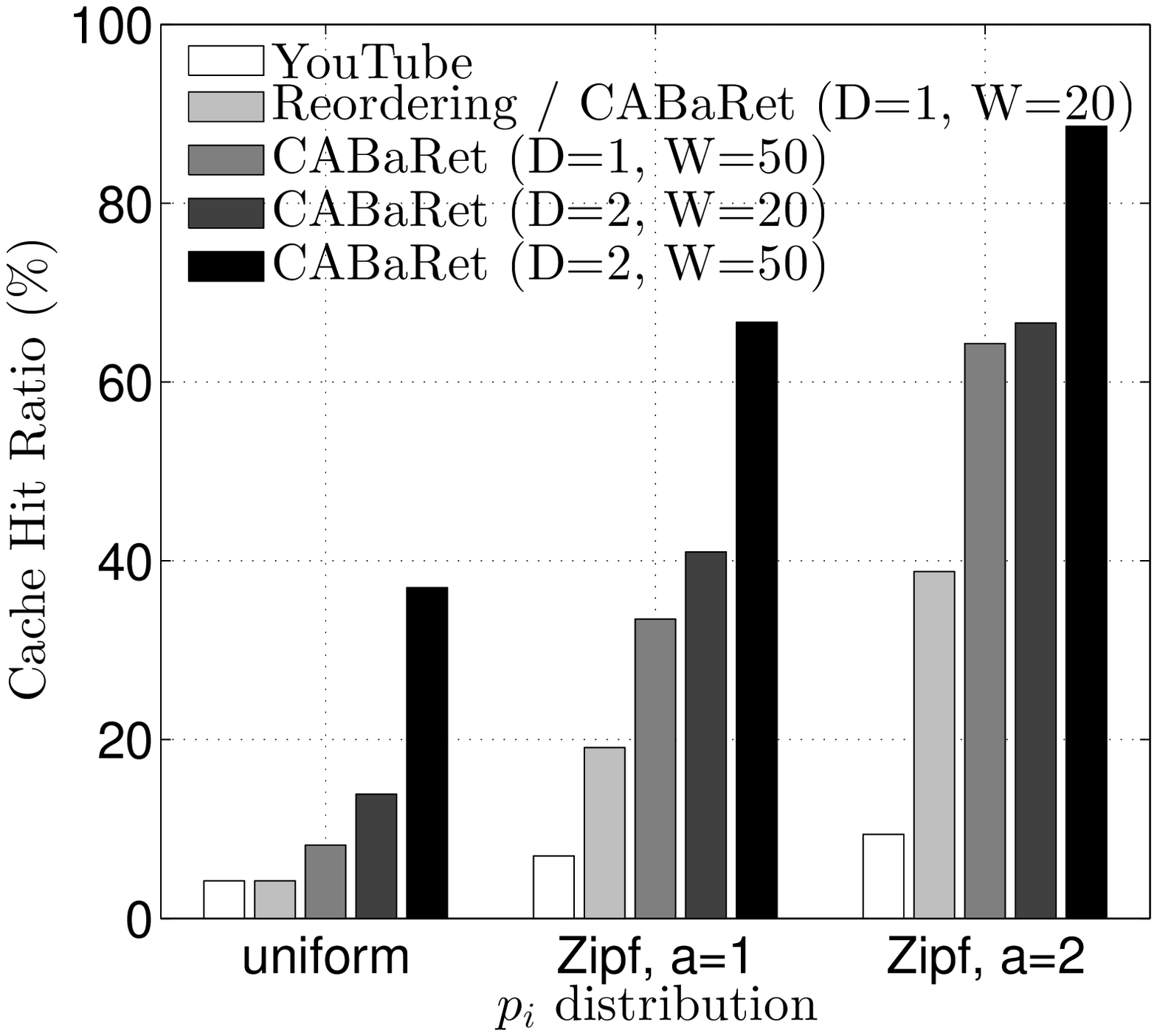}
\caption{CHR under different BFS parameters.% (for N=20)
}
\label{fig:chr-vs-bfs-param}
\end{minipage}
\hspace{0.03\linewidth}
\begin{minipage}[t]{0.49\linewidth}
\centering%
\mjrout{%
\includegraphics[width=1\columnwidth]{./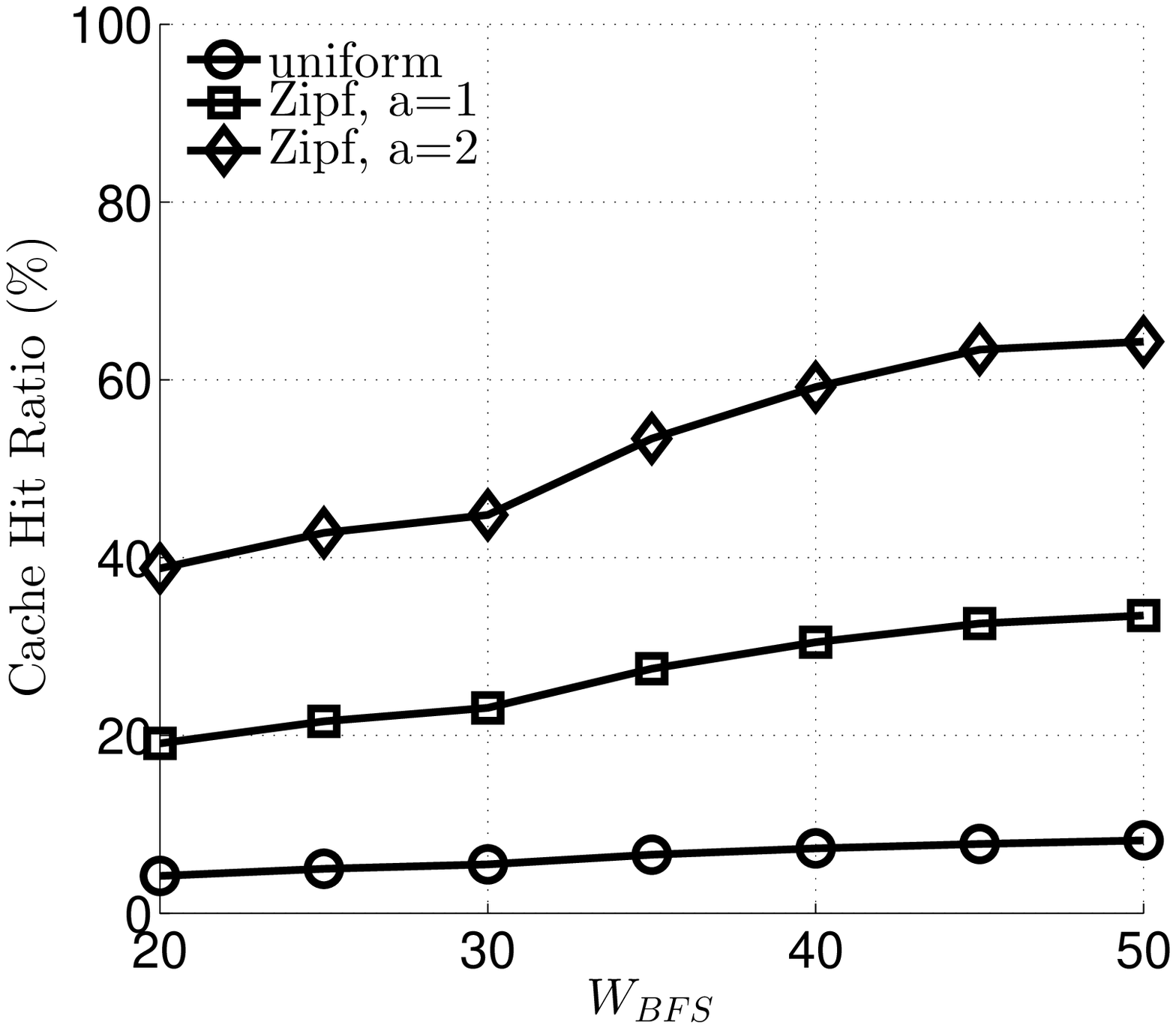}
\caption{\blue{CHR vs. $W_{BFS}$ ($D_{BFS}$=1).}}
}%
\includegraphics[width=1\columnwidth]{./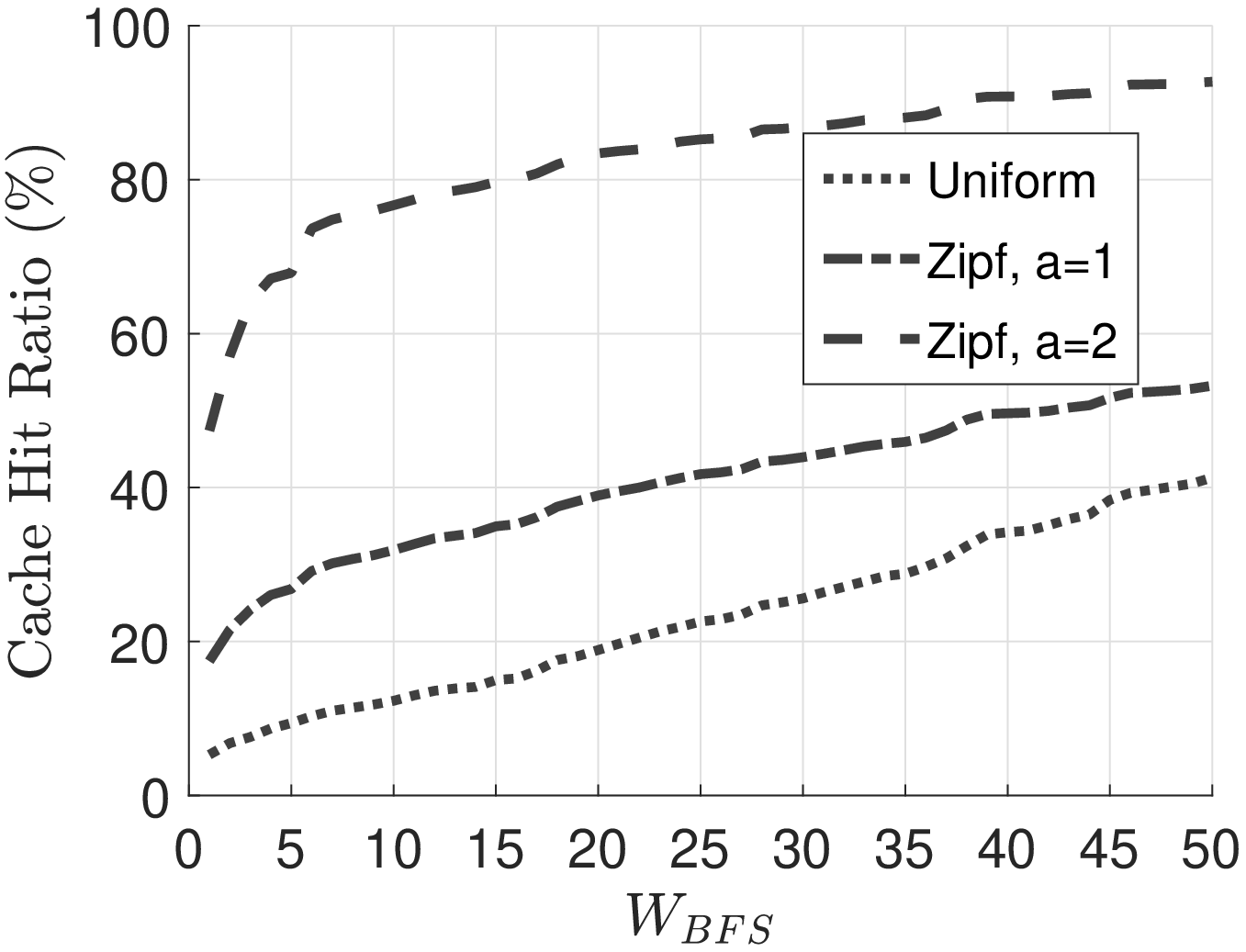}
\caption{\mjr{CHR vs. $W_{BFS}$ ($D_{BFS}$=2).}}
\label{fig:chr-vs-W}
\end{minipage}
\end{figure}

\mjrout{In Fig.~\ref{fig:chr-vs-W} we observe that, when higher preference is given to top recommendations, increasing the $W_{BFS}$ can increase CHR even with $D_{BFS}$=1. However, for uniform selection, significant gains cannot be brought only by increasing $W_{BFS}$. This indicates that, in such cases, considering also indirectly related contents (i.e., $D_{BFS}>1$, as \ourAlgo does), is required to yield a non-negligible increase in CHR (cf. Fig.~\ref{fig:chr-vs-bfs-param}).} 

In experiments concerning the --larger-- US region, the CHR values are lower for both regular YouTube ($<0.5\%$) and \ourAlgo ($1\%-43\%$) recommendations, due to the fact that the top popular videos appear with lower frequency in the related lists. However, the \textit{relative gains} from \ourAlgo are consistent with (or even higher than) the presented results.

\myitem{\mjr{CHR vs. knowledge of content relationships.}} \mjr{\ourAlgo uses only partial knowledge (i.e., black-box) of content relationships. This could bring some reduction in the maximum gains that can be achieved by a network-aware RS (knowledge vs. performance trade-off). For example, previously proposed algorithms that assume knowledge of the entire content relationships graph (which is equivalent to \ourAlgo with a large enough parameter $W_{BFS}$ to explore the entire catalog) could achieve higher gains. To quantify this trade-off, we present in Fig.\ref{fig:chr-vs-W} the CHR of \ourAlgo (performance) vs. the $W_{BFS}$ parameter (knowledge of content relationships). As expected the CHR increases when more information about the content relationships is available (i.e., larger $W_{BFS}$). However, when $p_{i}$ follows a Zipf distribution, which is more common in practice, the effect of $W_{BFS}$ is less intense. This indicates that the benefits of the proposed black-box approach (see \secref{sec:methodology-overview}) can be combined with a performance that is comparable to approaches requiring more information about the content relationships.}

\begin{figure}
\begin{minipage}[t]{0.47\linewidth}
\centering
\includegraphics[width=1\columnwidth]{./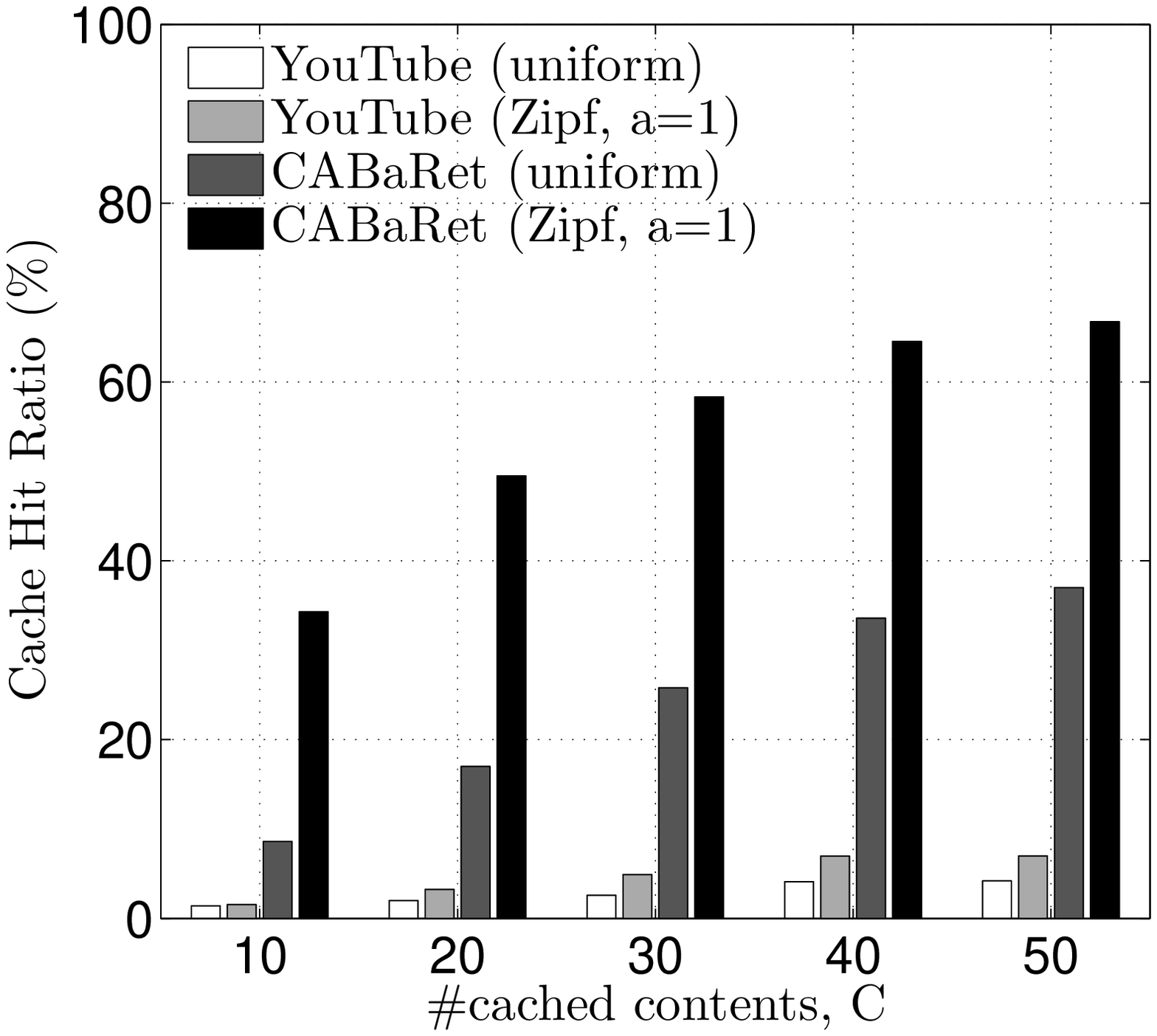}
\caption{CHR vs. \#~cached contents $C$\\ ($W_{BFS}$=50, $D_{BFS}$=2).}
\label{fig:chr-vs-C-d2}
\end{minipage}
\hspace{0.03\linewidth}
\begin{minipage}[t]{0.47\linewidth}
\centering
\includegraphics[width=1\columnwidth]{./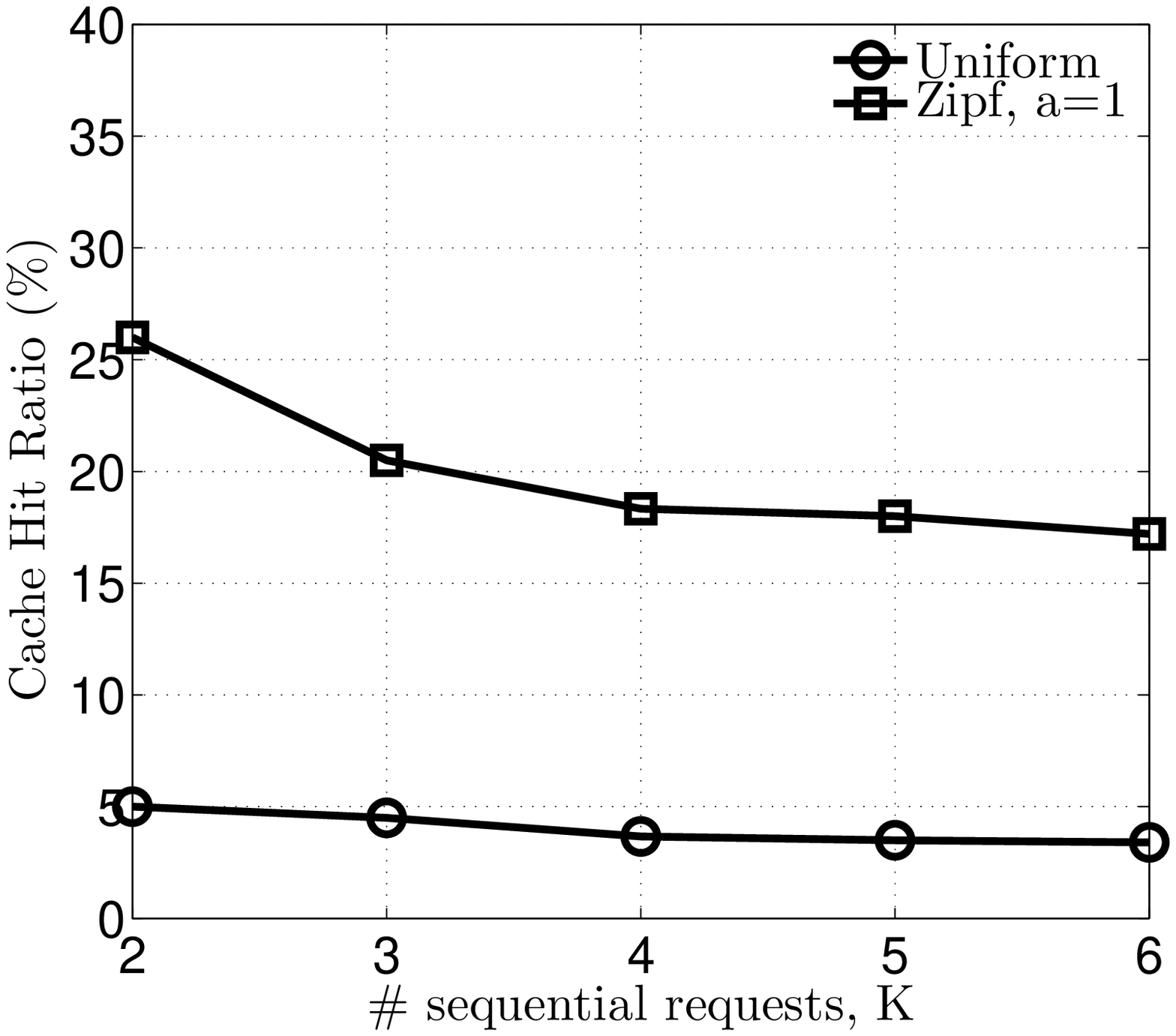}
\caption{CHR vs. \#~requests in sequence $K$ ($C$=20, $W_{BFS}$=20, $D_{BFS}$=2). \textit{Note}: y-axis up to 40\%.}
\label{fig:chr-vs-K-sequential}
\end{minipage}
\end{figure}

\myitem{CHR vs. number of cached videos.} We further consider scenarios with varying number of cached contents $C = |\mathcal{C}|$. In each scenario, we assume that the $C$ most popular contents are cached. Fig.~\ref{fig:chr-vs-C-d2} %Figures~\ref{fig:chr-vs-C-d1} and~\ref{fig:chr-vs-C-d2} show
shows the CHR achieved by \ourAlgo, in comparison to scenarios under regular YouTube recommendations. The results are consistent for all considered values of $C$; the CHR under \ourAlgo is significantly higher than in the YouTube case. Moreover, even when caching a small subset of the most popular videos, \ourAlgo brings significant gains. E.g., by caching $C=10$ out of the $50$ top related contents \ourAlgo increases the CHR from $2\%$ and $3.2\%$ to $17\%$ and $50\%$, for the uniform and Zipf($\alpha$=1) scenarios, respectively.

%%%% CHR vs. C: depth 1 and 2, in one figure
% \begin{figure}\centering
% \subfigure[$D_{BFS}$=1]{\includegraphics[width=0.47\columnwidth]{./}\label{fig:chr-vs-C-d1}}
% \subfigure[$D_{BFS}$=2]{\includegraphics[width=0.47\columnwidth]{./figures/fig_CHR_vs_cache_size_W50_D2.eps}\label{fig:chr-vs-C-d2}}
% \vspace{-\baselineskip}\caption{\blue{CHR vs. number of cached contents $C$ for scenarios with $W_{BFS}$=50, and (a) $D_{BFS}$=1 or (b) $D_{BFS}$=2.}}
% \label{fig:chr-vs-C}
% \end{figure}

% %%% CHR vs. C: depth 1 and 2, in two figures
% \begin{figure}\centering
% \begin{minipage}[t]{0.47\linewidth}
% \includegraphics[width=1\columnwidth]{./figures/fig_CHR_vs_cache_size_W50_D1.eps}
% \caption{\blue{CHR vs. C ($N$=20, $W_{BFS}$=50, $D_{BFS}$=1).}}
% \label{fig:chr-vs-C-d1}
% \end{minipage}
% %
% \hspace{0.03\linewidth}
% %
% \begin{minipage}[t]{0.47\linewidth}
% \centering
% \includegraphics[width=1\columnwidth]{./figures/fig_CHR_vs_cache_size_W50_D2.eps}
% \caption{\blue{CHR vs. C ($N$=20, $W_{BFS}$=50, $D_{BFS}$=2).}}
% \label{fig:chr-vs-C-d2}
% \end{minipage}
% \end{figure}

\subsubsection{Sequential Requests}
We now test the performance of our approach in scenarios where users enter the system and watch a sequence of $K$, $K>2$, videos (similarly to~\cite{giannakas-wowmom-2018}, and in contrast to the previous case, where they watch only two videos, i.e., $K=2$). At each step, the system recommends a list of videos to the user by applying \ourAlgo on the currently watched video. We denote as $v_{k}(i)$ the $k^{th}$ video requested/watched by a user in experiment $i$. We measure the CHR, which is now defined as
\begin{equation}\label{eq:chr-sequential-request}
\textstyle CHR = \frac{1}{M}\cdot \sum_{i=1}^{M} \sum_{k=2}^{K} \mathbb{I}_{v_{k}(i)\in \mathcal{C}}
\end{equation}
where $\mathbb{I}_{v_{k}(i)\in \mathcal{C}} =1$ if $v_{k}(i)\in \mathcal{C}$ and $0$ otherwise, over $M=100$ experiments per scenario.

Moving ``farther'' from the initially requested video (which belongs to the list of most popular and cached videos) through a sequence of requests, we expect the CHR to decrease, due to lower similarity of the requested and cached videos. However, as Fig.~\ref{fig:chr-vs-K-sequential} shows, the decrease in the CHR (under \ourAlgo recommendations) is not large. The CHR remains close to the case of single requests (i.e., for $K$=2 in the x-axis), indicating that our approach performs well even when we are several steps far from the cached videos. In fact, caching more than the top most popular videos appearing on the front page, would further reduce the CHR decrease.

%%%% CHR vs. sequential requests:in two figures
% \begin{figure}\centering
% \begin{minipage}[t]{0.47\linewidth}
% \includegraphics[width=1\columnwidth]{./figures/fig_CHR_vs_K_sequential_requests_W20_D2_C20.eps}
% \caption{\blue{CHR vs. number of sequential requests $K$ for $C$=20\\ ($N$=20, $W_{BFS}$=20, $D_{BFS}$=2).}\red{ Note: scale of y-axis up to 40\%}}
% \label{fig:chr-vs-K-sequential}
% \end{minipage}
% %
% \hspace{0.03\linewidth}
% %
% \begin{minipage}[t]{0.47\linewidth}
% \centering
% \includegraphics[width=1\columnwidth]{./}
% \caption{\blue{CHR vs. C for $K$=5 sequential requests  ($N$=20, $W_{BFS}$=20, $D_{BFS}$=2).}\red{ Note: scale of y-axis up to 40\%}}
% \label{fig:chr-vs-C-sequential}
% \end{minipage}
% \end{figure}

\subsection{\blue{Results: ``Search Bar'' Video Demand}}
\label{sec:demand-google}
Up to now, we have considered a user that starts his/her viewing session by selecting one of the trending videos recommended in the YouTube homepage. While this is a common behavior (in YouTube and similar services), we now consider the other popular option for a user, which is to enter the YouTube webpage/app and select a desired video (e.g., through the search bar or directly typing the video url) irrespectively of the current trends. In the following, we describe our measurements and experiments for realistic scenarios that capture this second class of user behavior. Since considering users to select arbitrary initial videos, dramatically increases the set of initial videos (i.e., from $50$ top trending contents in~\secref{sec:results} to -theoretically- the entire YouTube catalogue that counts more than 5 billion videos), the CHR achieved by \ourAlgo (and any algorithm) is expected to decrease. Our goal here is to quantify the CHR gains when users start their session by searching a video through the search bar, and test whether the proposed approach can still provide considerable benefits in this ``worst-case'' scenario.

\textit{Remark:} The cache stores the $C$ top most popular YouTube videos, as in~\secref{sec:experiments-setup}. Hence, our results are comparable to the results of~\secref{sec:results}, and demonstrate the performance of the same scheme for this second class of users (who have different initial video demand).

\myitem{Initial video demand through the ``search bar''.}
We assume a user that enters the UI and searches though the search bar for a video according to her preferences (i.e., she does not watch one of the recommended trending videos as in~\secref{sec:experiments-setup}). While for the initial demand we could select randomly a video from the entire YouTube catalogue, e.g., uniformly or with a probability proportional to the total number of views, this would not capture the user behavior observed in practice: not all contents are equally probable to be selected, total number of views is not necessarily proportional to current demand (e.g., recent videos attract more clicks than older videos), timely topics attract more attention, etc. Hence, to simulate realistic ``video searches'' we apply the following methodology. 

\begin{itemize}[leftmargin=*]
    \item The user types a keyword/phrase in the search bar. To obtain a dictionary of keywords that correspond to popular and recent interests, we use the  Google Trends API~\cite{google-trends-api}. For each region, we collect the top 10 keywords for seven consequent days within a week %in June 2019 
    (some examples of keywords from our dataset are ``NBA Top Plays'', ``Avengers Trailer'', ``Grammy Nominations'', ``How to boil an egg?'', etc.).
    %\savvas{Grammy Nominations, Who dies in Infinity War?, How to apply magnetic lashes?, Kim Kardashian, Donald Trump, How to boil an egg? Is machine learning overhyped?}
    
    \item To map keywords (Google Trends) to YouTube videos, we pass each keyword to the YouTube API, which returns a list of video IDs, i.e., the list that would be returned if a user entered this keyword in the YouTube search bar. We select the first video ID from the list of each keyword. In total, we collect $70$ video IDs, of which we use the first $50$ (for consistency with the top $50$ trending videos in Section~\ref{sec:experiments-setup}). We call the list of these $50$ video IDs, as ``top Google trends''.
    
    \item In each experiment, we assume a user that enters the UI, watches one of the $50$ ``top Google trends'' videos, and then select one of the $N$ recommended videos to watch next (as described in Section~\ref{sec:experiments-setup}.
\end{itemize}

\myitem{CHR vs. BFS parameters.} Figure~\ref{fig:searchListResults} shows the CHR (single requests - \eq{eq:chr-single-request}) achieved with different \ourAlgo parameters in various scenarios (x-axis). The absolute values of CHR are in all scenarios 20\%--65\% lower compared to those in Fig.~\ref{fig:chr-vs-bfs-param}, which corresponds to users with initial requests for the top popular YouTube videos. While this decrease is expected in these more challenging scenarios (since the top YouTube videos are cached, but users start their viewing session from arbitrary videos), \ourAlgo can still effectively exploit the caching vectors and achieve, e.g., a CHR up to 32\% and 57\% in the Zipf(a=1) and Zipf(a=2) scenarios, respectively, in which otherwise we would observe a percentage within 1\%--2.4\% %\savvas{[1.0, 1.8, 2.4]}
of cache hits under the original YouTube recommendations.

Moreover, we observe that the relative difference in performance between the different recommendation schemes (e.g., YouTube vs. \ourAlgo) remains the same as in Fig.~\ref{fig:chr-vs-bfs-param}: \ourAlgo significantly increases the caching efficiency by 10 times (uniform) to more than 20 times (Zipf, a=2) even for the more diverse (and thus challenging for the caching system) ``Search Bar'' video demand patterns.
% \savvas{
% CHR(YouTube) = [1.0, 1.8, 2.4]; %N=20, YouTube
% CHR(N=20, W=20, D=1) = [1, 5.3, 11.6]; %N=20, W=20, D=1
% CHR(N=20, W=50, D=1) = [1.7, 8.4, 18.1]; %N=20, W=50, D=1
% CHR(N=20, W=20, D=2) = [3.6, 15.7, 31.8]; %N=20, W=20, D=2
% CHR(N=20, W=50, D=2) = [9.7, 32.4, 57.1]; %N=20, W=50, D=2
% }

\myitem{CHR vs. set of cached contents.} Figure~\ref{fig:searchListResults4} shows the CHR in the same video demand scenarios presented in Fig.~\ref{fig:searchListResults}, but now under a different caching policy. We consider a cache that stores $50$ videos of the ``top Google trends''. This makes the caching policy more targeted to the ``Search Bar'' traffic demand, and thus we expected an improved performance. Fig.~\ref{fig:searchListResults4} verifies this intuition: in all scenarios the CHR of \ourAlgo is higher than in Fig.~\ref{fig:searchListResults}. \blue{These results suggest that changing also the caching policy to better match the recommendations, i.e., joint selection of caching and recommendations, can further improve performance; indeed in Appendix~\ref{appendix:joint-cache-rec} we show that when caching is optimized under \ourAlgo recommendations, an extra $\times$2 increase in the cache hit ratio can be achieved.  %(see~\secref{sec:cabaret-caching-optimization}); we explore the performance under caching optimization in the next section.
}

Another interesting observation in the top Google trends caching scenarios is that even with $W_{BFS}=20$ we can achieve high performance (CHR comparable to $W_{BFS}=50$), which was not the case in the other scenarios we tested (e.g., Fig.~\ref{fig:chr-vs-bfs-param} or Fig.~\ref{fig:searchListResults}). This indicates that the main factor to improve performance is the depth of the BFS: for $D_{BFS}=2$ the CHR becomes significantly higher; due to more diversity in ``Search Bar'' video demand and top Google trends, we need to explore deeper in the video relationship lists. 

\begin{figure}
\centering
\subfigure[Caching: top-50 most popular YouTube]{
\includegraphics[width=0.46\linewidth]{./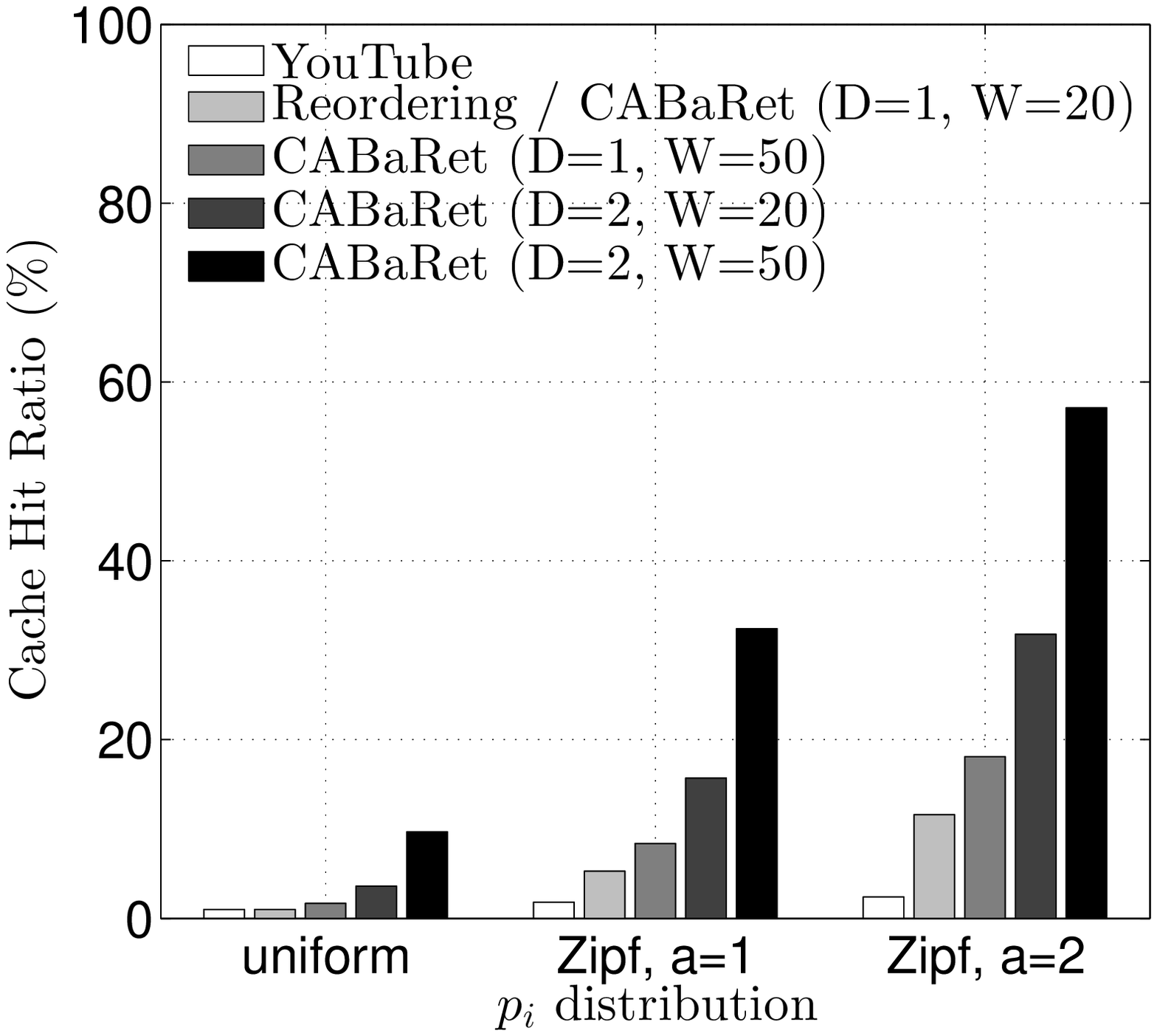}\label{fig:searchListResults}}
\subfigure[Caching: top-50 Google trends]{
\includegraphics[width=0.46\linewidth]{./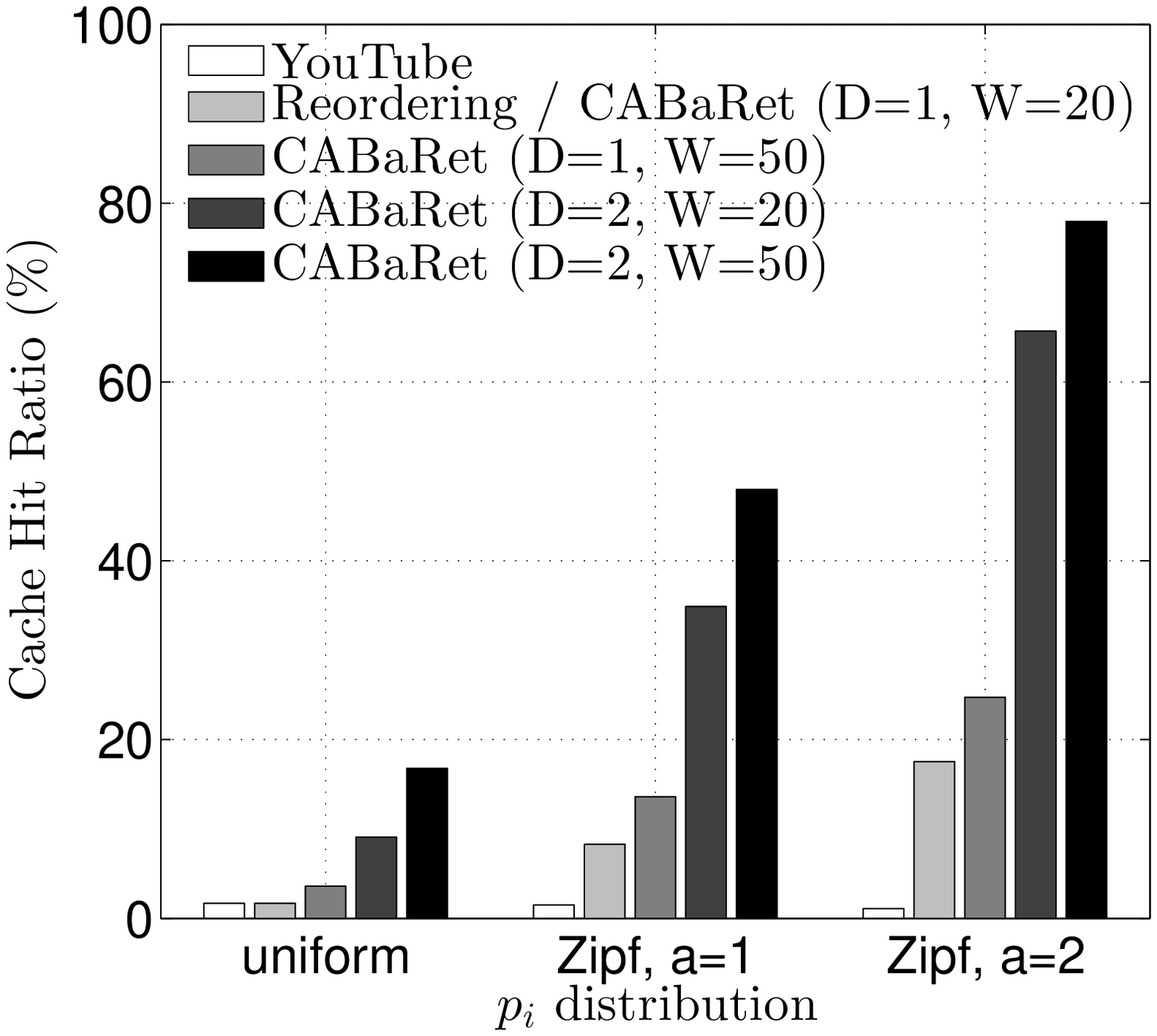}\label{fig:searchListResults4}}
\caption{CHR under different BFS for ``Search Bar'' video demand and cache populated with the (a) top-50 most popular videos, and (b) top-50 Google trends }
\end{figure}

% \section{Caching Optimization}
% \label{sec:joint}
% \input{JointCacheRec}

\section{\blue{Experiments with Real Users}}
\label{sec:experiments}
\subsection{Experimental Testbed}
We implemented an experimental platform with the architecture and main functionality of the framework presented in~\secref{sec:methodology}. Our goal is to conduct experiments with real users to (i) evaluate the performance that can be achieved in practice, and (ii) validate our assumptions, insights and measurement findings.

\myitem{Overview.} The UI is designed to accommodate our experiments (rather than resembling a real service or a prototype), and a screenshot is shown in Fig.~\ref{fig:exp-screenshot} (more details in~\secref{sec:experiment-session}). For the back-end, we assume that a list of cached video IDs is available at the time of the experiment (see~\secref{sec:experiments-setup}), and we use the YouTube API to embed a YouTube video player in our platform and serve video contents to the participants of the experiment. Finally, we generate recommendations using the \ourAlgo algorithm.

% \pavlos{@Savvas: maybe we can mention the different technologies that we used to implement the different components (e.g., javascript, php, etc; write a list with the main modules/components and the technology you used in each of them } 
% \savvas{
% \begin{itemize}
%     \item HTML5 (client)
%     \item CSS3 (client)
%     \item JavaScript (client)
%     \item PHP - to receive requests from client (server)
%     \item Bash - to handle Apache (server)
%     \item Crontabs - to update cache every day (server)
%     \item NodeJS - to collect YouTube Data (server)
% \end{itemize} }

\myitem{Open-source code.} To facilitate future research on this topic, we open-source the code of the experimental testbed~\cite{cabaret-github}. Moreover, our implementation is modular and easily extensible. Thus, researchers and practitioners can use (as well as configure, parametrize, modify, or extend) our testbed to conduct their own experiments. More specifically: (i) the UI can be easily configured to present a desired number of recommendations $N$, include a search bar (e.g., to conduct experiments similar to~\secref{sec:demand-google}), add/remove rating questions, etc.; (ii) the list of cached video IDs in the back-end can be arbitrarily modified; (iii) the researcher can implement and use any other new algorithm (instead of \ourAlgo), by only modifying and calling a different method in the recommendation module.

\myitem{Collected dataset.} We conducted an experimental campaign recruiting participants through mailing lists and social media, and collected 742 samples from users in regions around the world. Adding to the open-source code, we also publish the dataset with the results of our experiments~\cite{cabaret-github}, which contains more information than those presented in this paper\footnote{We refer the interested reader to~\cite{sermpezis2019towards} for a more detailed analysis of the experimental results.}. We believe that this dataset can be of interest and facilitate researchers, since recruiting users and conducting experiments is an arduous task.

\subsection{Experiment Session}\label{sec:experiment-session}
We invited users to visit our platform and participate in our experiment. We first summarize here the steps of each experiment/session, and elaborate on some key steps subsequently. 
\begin{description}[font=\normalfont\itshape\space]%[leftmargin=*]
    \item[Action 1:] The user enters the platform and is requested to select from a list his/her preferred \textit{region}. 
    
    \item[Action 2:] After selecting a region, she is redirected to a page with instructions about the experiment. There, she is asked to start the viewing session by selecting a video from a list of $20$ trending (in the selected region) videos.
    
    \item[Action 3:] When selecting a video to watch, the user is redirected to a page as shown in Fig.~\ref{fig:exp-screenshot}, where: (a) The user watches the video (for as much time as she wants); (b) $5$ videos are recommended to the user to watch next; (c) the user is requested to provide some ratings about her viewing experience, including the relevance of recommendations (\textit{QoR}).
    
    \item[Action 4:] The user selects one of the $5$ recommended videos to watch next, and then step 3 is repeated. The maximum number of videos to watch is $5$. After the fifth video, the experiment session ends.
\end{description}

The information that is communicated to the users (when they enter the experimental platform) is that they are going to select, watch, and rate a series of five YouTube videos for the purposes of a research study. No further information is revealed to users about how we select the videos to recommend, to avoid biasing their selections and ratings. We also inform the users that no personal information is collected.

\begin{figure}
    \centering
    \includegraphics[width=1\linewidth]{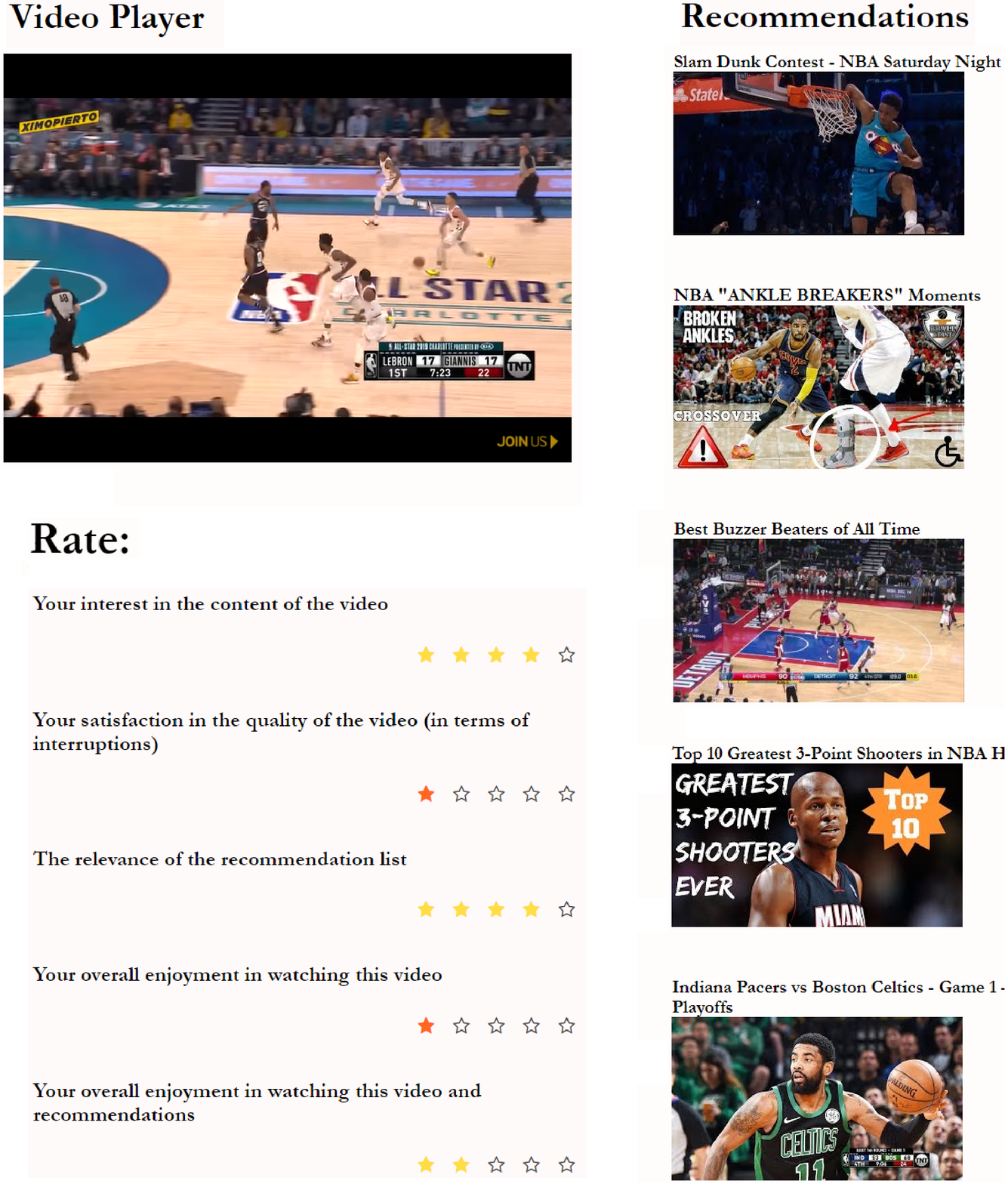}
    \caption{Experimental platform - instance of a user experiment: (i) a user watches a video (top/left), and is requested to (ii) rate her satisfaction from the watched video and recommendations (bottom/left) and (iii) select one of the recommendations to proceed to the following video (right).}
    \label{fig:exp-screenshot}
    %\vspace{-0.1in}
\end{figure}

\subsection{Experiment Setup}

\myitem{Region (Action 1).} We offer as options a subset of the regions provided by the YouTube API~\cite{youtube-api}; we selected $7$ representative regions (different continents, diverse demographics, available video data). 
%\akis{shall we report numbers about the geographic spread of measured data (to avoid the iwqos comment)? or we don't yet have a good enough spread, so it's better to mention what you write above, without saying how many samples come from each of these regions?} 

\myitem{Initial list of videos (Action 2).} For each region, we retrieve from the YouTube API the list of $50$ top trending videos. We randomly select $20$ of them (for the selected region) to present to the user.

\myitem{Caching.} We compiled a list of $500$ videos IDs that are assumed to be cached\footnote{Note that we do not cache any video, since this is not allowed by the terms of use of the YouTube service.}
; we consider a different list per region. In each list, we select to first include the top $50$ trending videos in this region. Then, for each of these $50$ videos, we request its $50$ recommendations / related videos provided by YouTube API. From these $2500$ ($50\times50$) total videos, we add in the list the $450$ videos with the higher number of views (``most popular''). %We stress that this would not be necessarily the ``optimal'' caching policy (see Section~\ref{sec:results-greedy}, or related literature, e.g., see~\cite{sermpezis2018soft,chatzieleftheriou2019jointly,zhu2018coded}), but rather it resembles a baseline popularity based caching policy. 

\myitem{List of recommendations (Action 3b).} The list of the $5$ recommendations given to the user when watching a video are generated by \ourAlgo. We tuned the parameters of \ourAlgo as follows: the width of the BFS is $50$ in the first depth, and for the first $10$ of the item found in the first depth we search in second depth as well and retrieve a list of $50$; in total we compile a list of $50+10\cdot 50 = 550$ videos. This modification compared to the parameters used in~\secref{sec:measurements} was done for scalability reasons (number of available credits, time needed by the YouTube API to respond, etc.). 

\myitem{Collected data (Action 3c).}
In each experiment session we collect the following data:
\begin{itemize}[leftmargin=*]
    \item ID of watched video
    \item IDs of the final recommendation list (i.e., the $5$ videos presented in the right side in Fig.~\ref{fig:exp-screenshot}), and the positions of videos in this list
    \item ID of the initial YouTube recommendations%, comprising $20$ videos, and the positions of videos in this list
    ; these videos were not presented to the user
    \item IDs of videos that are (assumed to be) cached
    \item User ratings%Ratings of the user for \textit{Interest}, \textit{QoS}, \textit{QoR}, \textit{QoE}, \textit{QoE+R}.
\end{itemize}

\subsection{Results}
\myitem{Key finding: The CHR in the real-user experiments is 47\%.}

In our experiments with \ourAlgo recommendations, a percentage of 47\% among the %742 (\savvas{if you subtract the first sessions we have a total of 514})
videos selected and watched by real users, was for cached videos\footnote{Note that, similarly to the calculation of \eq{eq:chr-single-request}, this percentage does not include the first video views of the experiments (i.e., \textit{Action 2}), since all initial recommendations are for cached videos.}. While our experimental results are admittedly preliminary for a quantitative analysis, they qualitatively verify that we achieve in practice (i.e., with real users) the CHR values demonstrated in~\secref{sec:measurements}.

Moreover, in Fig.~\ref{fig:avgCHRstep} we present how the CHR (calculated as in \eq{eq:chr-sequential-request}) varies with the number of the sequence requests, i.e., when we move farther from the initial recommendations for the top popular (and cached) videos. We observe that our findings validate the corresponding measurement results in Fig.~\ref{fig:chr-vs-K-sequential}, i.e., as expected the CHR decreases (from around 70\% in the second step to 50\% after five steps), however, this decrease is not large.
% \pavlos{@Savvas: fix all figures format and captions (from Fig.~\ref{fig:chr-vs-K-sequential} and after)}

\begin{figure}
\centering
\begin{minipage}[t]{0.46\linewidth}
\centering
\includegraphics[width=1\columnwidth]{./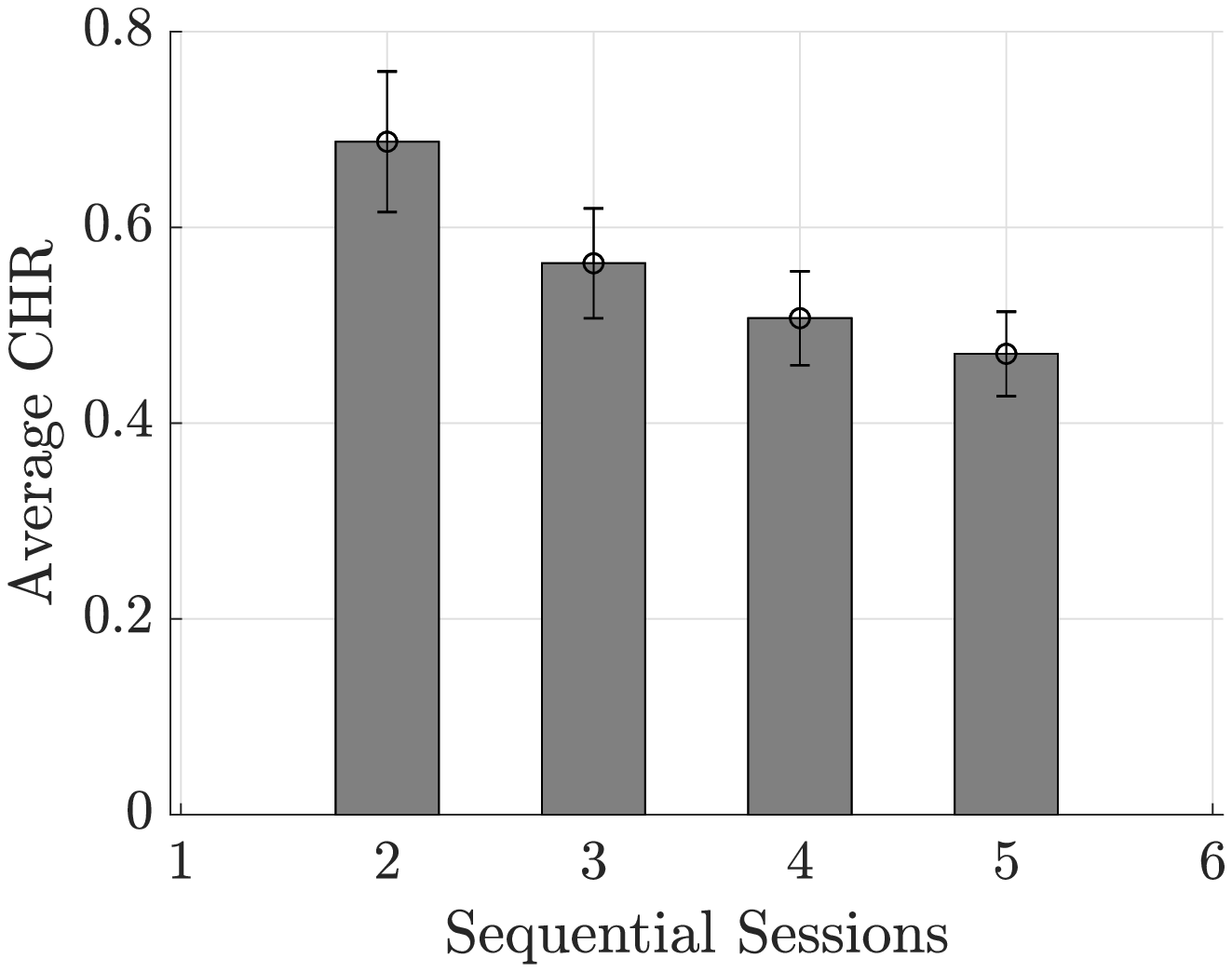}
\caption{CHR vs. \#requests in sequence $K$ ($C$=500, $W_{BFS}$=20, $D_{BFS}$=2).}
\label{fig:avgCHRstep}
\end{minipage}
\hspace{0.03\linewidth}
\begin{minipage}[t]{0.46\linewidth}
\centering
\includegraphics[width=1\columnwidth]{./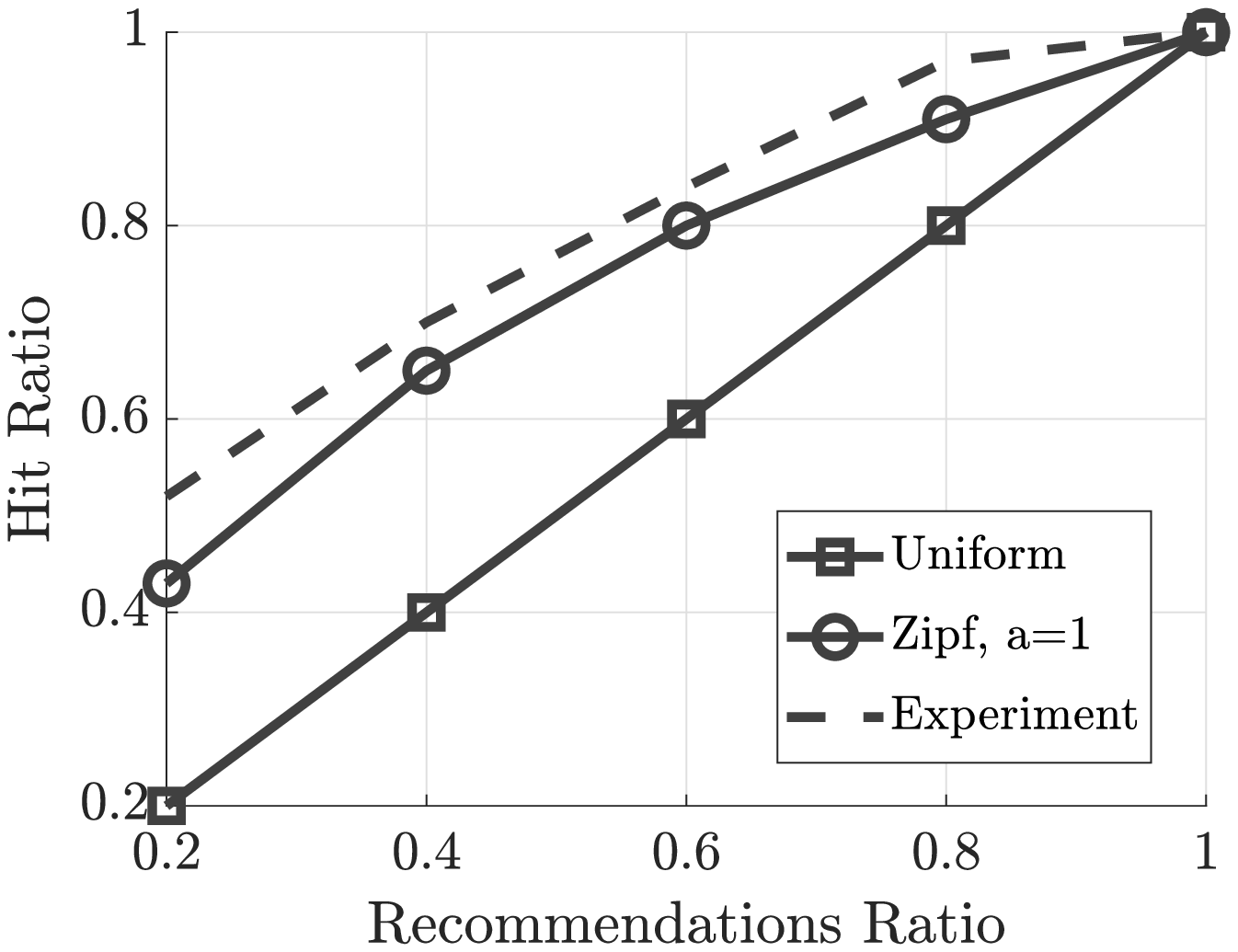}
\caption{CHR vs fraction of cached videos in recommendation list}

\label{fig:cd-vs-5}
\end{minipage}
\end{figure}

The observed decrease in the CHR, when moving from the top popular list is due to the fact that there can be found less (directly or indirectly) related contents that are cached. Table~\ref{tab:percentage-ccr0} shows the fraction of sessions, in which no cached content was found by \ourAlgo at $x^{th}$ request in sequence by a user. After five requests, in 11\% of the cases \ourAlgo did not find any cached related video to recommend (i.e., resulting in at least 11\% cache misses), while among the first requests this percentage is only 2\%.

\begin{table}[h]
\centering
\caption{Percentage of experiment samples in which none of the videos in the recommendation list was cached.}
\label{tab:percentage-ccr0}
\begin{tabular}{c|ccccc}
{Request step}     &  1&2&3&4&5\\
\hline
{\% experiment samples} &{2\%}&{5\%}&{8\%}&{10\%}&{11\%}
\end{tabular}
\end{table}

% \begin{figure}
% \centering
% \begin{minipage}{0.46\linewidth}
% \centering
% \includegraphics[width=1\columnwidth]{./}
% \caption{Sequential steps in experiment VS Ratio of Sessions with CRR = 0}
% \label{fig:crr0}
% \end{minipage}

% \end{figure}
\begin{figure}

\centering
\begin{minipage}[t]{0.46\linewidth}
\centering
\includegraphics[width=1\columnwidth]{./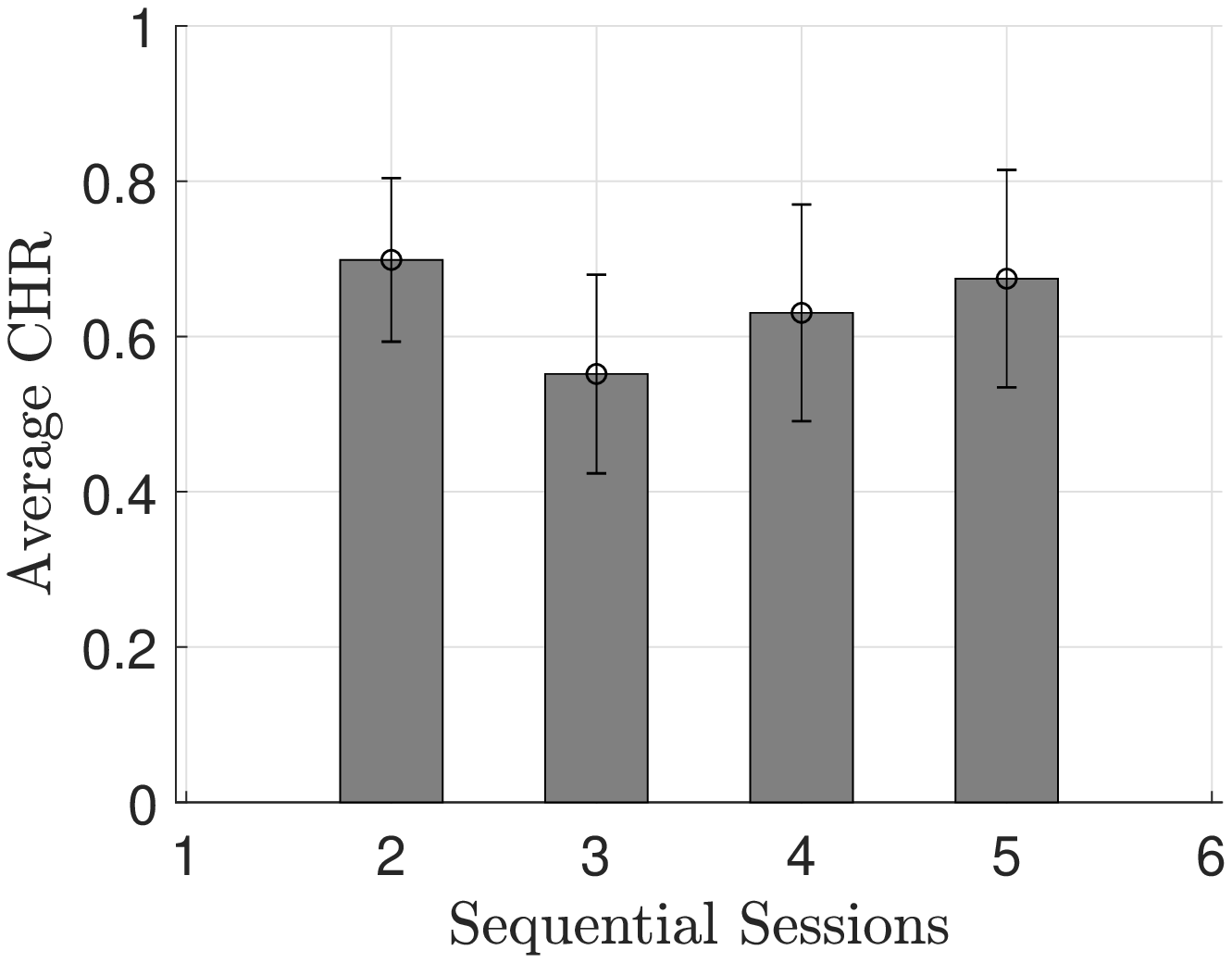}
\caption{CHR when at least one recommended content is cached vs. \#requests in sequence $K$ ($C$=500, $W_{BFS}$=20, $D_{BFS}$=2).}
\label{fig:crr1}
\end{minipage}
\hspace{0.03\linewidth}
\begin{minipage}[t]{0.46\linewidth}
\centering
\includegraphics[width=1\columnwidth]{./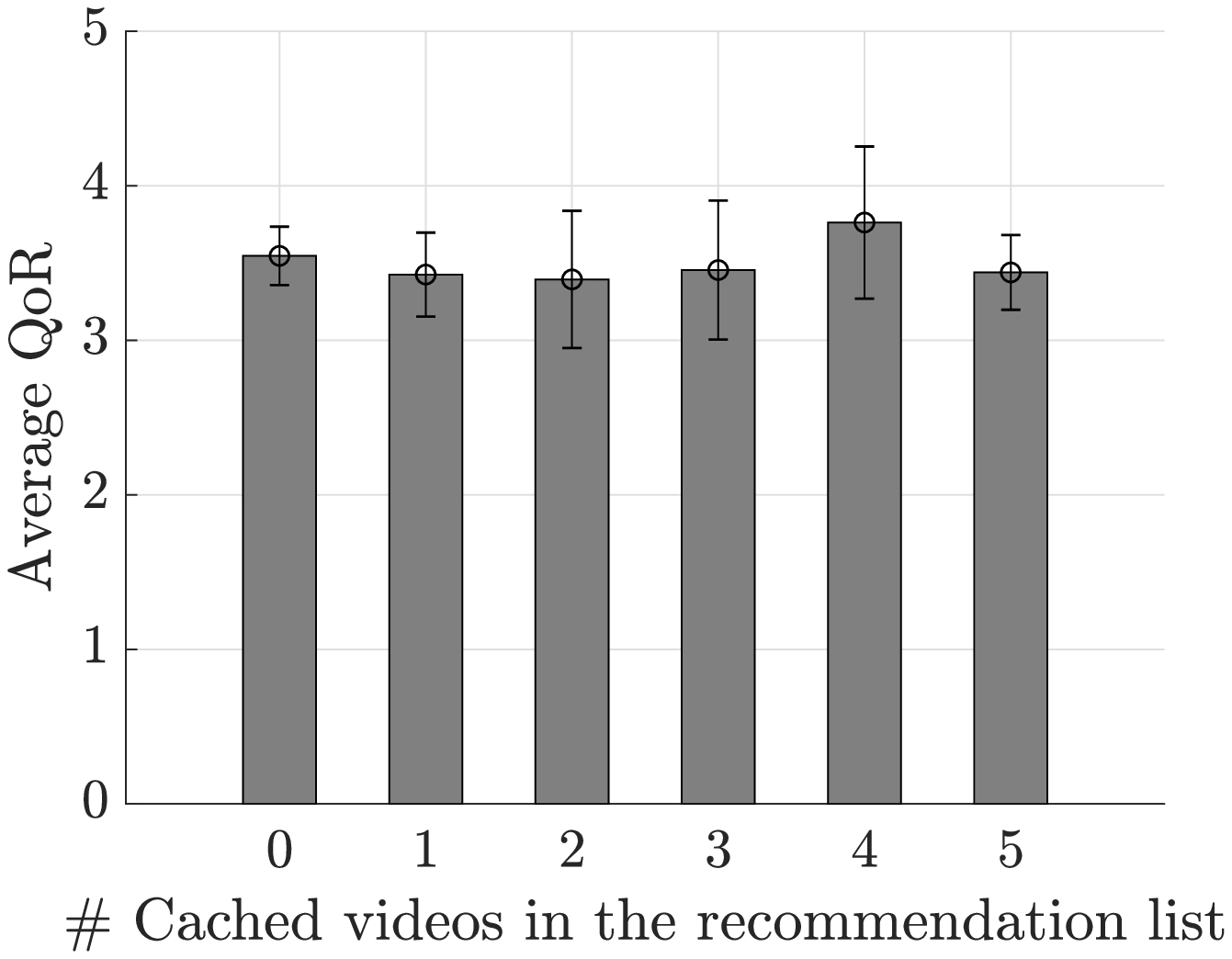}
\caption{QoR vs. \#cached videos in recommendation list}
% \pavlos{@Savvas: do not use the term CRR in the plots, we have not defined it and we will not use it in this paper. Write it as ``\# cached videos in the recommendation list''}}
\label{fig:avgqorVScrr}
\end{minipage}

\end{figure}

\mjr{Finally, we present in Table~\ref{tab:experiments-CHR-vs-C} the achieved CHR in our experiments by considering different number (and sets) of cached contents (i.e., fractions of the total 500 contents assumed cached in the baseline scenario)\footnote{While for this analysis we assume a fraction of the 500 cached contents, in the experiments \ourAlgo considered in its recommendations the initial set of the 500 contents. This means that the presented results may underestimate the best CHR that could achieved by \ourAlgo}. The increase in CHR is almost linear with $C$ in both scenarios, which is in line with the measurement results (Fig.~\ref{fig:chr-vs-C-d2}). Most popular caching is more efficient than random caching (as expected), and this effect of the caching policy becomes more important for smaller cache sizes.}

\begin{table}[h]
\centering
\caption{\mjr{CHR vs. cache size $C$, with the most popular (top row) or random (bottom row) contents being cached.}}
\label{tab:experiments-CHR-vs-C}
\begin{tabular}{l|cccccc}
{$C$}         & 50 & 100 & 200 & 300 & 400 & 500\\
\hline
{CHR (most popular)}&  0.11& 0.16& 0.24& 0.32& 0.40& 0.47\\
{CHR (random)}      & 0.05& 0.12& 0.19& 0.27& 0.38& 0.47
\end{tabular}
\end{table}

\myitem{Key finding: Users tend to select the top recommendations, even when those are ``nudged'' towards cached contents.}

However, what happens when at least one cached content can be recommended (i.e., is directly or indirectly related to the currently watched content)? Figure~\ref{fig:crr1} shows the CHR per step (x-axis) when at least one cached content is recommended by \ourAlgo (we remind that cached contents are placed in the top of the list). We can observe that the CHR is always more than 55\% (and up to 70\%), which strongly indicates that users select the \ourAlgo recommendations, when they are provided.

Figure~\ref{fig:cd-vs-5} shows in more detail the preference of videos with respect to the \ourAlgo recommendations. Specifically, the y-axis is the $CHR(x)$, i.e., the CHR in sessions where the \ourAlgo recommendations included $x$ cached videos (x-axis). The continuous line shows the $CHR(x)$ observed in our experiments, while the dashed lines correspond to an hypothesized \textit{uniform} selection of contents (i.e., the user selects randonmly one of the 5 presented recommendations), and \textit{Zipf} selection of contents. The main observation is that users tend to prefer the \ourAlgo recommendations presented in the top of the list; this behavior that has been previously reported for the YouTube service, does not seem to be affected by the fact that the recommendations are nudged towards cached videos. This indicates that \textit{using a carefully designed QoS-aware RSs does not have a negative impact on user preferences}.

Finally, these results also provide and insights on the tuning of \ourAlgo: The decrease in the CHR is mainly due to sessions where \ourAlgo did not find any cached video in the related list $\mathcal{L}$, and not due to the number of cached videos in the recommendation list. Hence, for these sessions (with 0 cached recommendations) we could tune \ourAlgo to search in larger depth/width for cached contents; even finding one such content and placing it in the top of the recommendation list, would lead to increased CHR.

\myitem{Key finding: The recommendations generated by \ourAlgo are perceived as high-quality by users.}

The results presented above, demonstrate that applying an algorithm like \ourAlgo in practice, could indeed lead to performance gains, since users are willing to select the nudged recommendations towards cached videos. 
Apart from the network benefits, in this last part of our analysis, we investigate whether the \ourAlgo recommendations satisfy the user: Do the users select the nudged recommendations because the find them appealing or because they do not have a (much) better alternative? Are they satisfied by the recommended videos?

In the experiments, we asked the users to provide ratings for the relevance of the recommendation list (QoR) and their interest in the watched video. Figure~\ref{fig:avgqorVScrr} shows the average rating for QoR (y-axis) in sessions where $x$ out of the $5$ recommendations are for cached videos. We observe that the users do not significantly differentiate, in terms of QoR, between the initial YouTube recommendations ($x=0$) from the \ourAlgo recommendations ($x>0$). This clearly shows that the nudged recommendations are not perceived as intrusive by the user.

% \begin{figure}\centering
% \centering
% \includegraphics[width=0.8\columnwidth]{./figuresSavvas/avgQoRvsCRR.eps}
% \caption{Average QoR for different numbers of cached videos in the recommendation list}
% % \pavlos{@Savvas: do not use the term CRR in the plots, we have not defined it and we will not use it in this paper. Write it as ``\# cached videos in the recommendation list''}}
% \label{fig:avgqorVScrr}
% \end{figure}

In addition to this, we investigate whether the users ultimately liked the video they selected to watch (and, e.g., were not misled by the recommendation). Table~\ref{tab:interest-ratings-pmf} shows the distribution of the \textit{Interest} ratings for the cached and non-cached videos. The interest for contents from initial YouTube recommendations (i.e., all non-cached videos) is not significantly different than the interest in the cached videos that the users watched. This further supports our arguments and provide experimental evidence that (a) \ourAlgo can find high-quality recommendations, and (b) nudging recommendations towards cached video, does not have a significant negative impact in user interest.

\begin{table}[h]
\centering
\caption{Percentage of responses per \textit{Interest} rating for the cached and non-cached videos.}
\label{tab:interest-ratings-pmf}
% \begin{tabular}{l|ccccc}
% {} & \multicolumn{5}{c}{{Rating of \textit{Interest}}}\\
% \textbf{\#Responses}       & {1}$\bigstar$ & {2}$\bigstar$ &{3}$\bigstar$ &{4}$\bigstar$ &{5}$\bigstar$ \\
% \hline 
% % {Low-QoS videos} &  17  & 34 &  41 & 65 & 58 \\
% % {High-QoS videos} &  40 &  38 & 58 & 72 & 106
% {Low-QoS videos} &  ~8\% &   16\% &   19\% & 30\% &   27\% \\
% {High-QoS videos} & 13\% &   12\% &   18\% &   23\% &   34\%
% \end{tabular}

\begin{tabular}{l|ccccc}
{} & \multicolumn{3}{c}{{Rating of \textit{Interest}}}\\
{}%\textbf{\% Responses}       
    & {1-2}$\bigstar$ &{3}$\bigstar$ &{4-5}$\bigstar$ \\
\hline 
{Non-cached videos} &  24\% &   19\% & 57\% \\
{Cached videos} & 25\% &   18\% &   57\%
\end{tabular}
\end{table}

\section{Related Work}
\label{sec:related}
\blue{The joint network and recommendations paradigm has been recently introduced, in the context of soft cache hits~\cite{sch-chants-2016,sermpezis2018soft} or network-friendly recommendations~\cite{chatzieleftheriou2017caching,chatzieleftheriou2019jointly}, aiming to jointly design the content caching policy and the recommendation policy in order to achieve higher cache hit rates. The promising gains in the caching efficiency (which comes ``for free'' from a technology point of view, e.g., without extra investment in equipment or new communication technologies) demonstrated by these early works, motivated more work on the topic~\cite{giannakas-wowmom-2018,%kastanakis-cabaret-mecomm-2018,
zhu2018coded,lin2018joint,song2018making,qi2018optimizing,giannakas2019order,chatzieleftheriou2019joint,gupta2019effect,lin2019content,assis2019recomendaccao,sermpezis2019towards,
costantini2019approximation,garetto2020similarity,tsigkari2020user,li2020leveraging}.}

\blue{The majority of related works considers \textit{cache-aware} recommendations in mobile edge caching~\cite{sch-chants-2016,chatzieleftheriou2017caching,sermpezis2018soft,giannakas-wowmom-2018,zhu2018coded,chatzieleftheriou2019jointly,costantini2019approximation,garetto2020similarity,tsigkari2020user,li2020leveraging} to improve the cache hit ratio by optimizing the recommendation and/or caching policies. However, the same principles can easily generalize to \textit{network-aware} recommendations, where each content can be delivered by the network with a given cost or quality~\cite{giannakas2019order}. Other aspects considered in literature include coded caching~\cite{zhu2018coded}, broadcast communications with coded transmissions~\cite{lin2018joint,song2018making,lin2019content}, the extra dimension of user association to small base stations~\cite{chatzieleftheriou2019joint}, or swarming systems~\cite{content-recommendation-swarming}.} \mjr{A similar concept is similarity caching~\cite{garetto2020similarity}, with can have more generic applications (e.g., machine learning tasks) than multimedia services.}

\blue{Our work is complementary to previous works that study techniques for optimizing the network performance. To our best knowledge, all existing studies have evaluated the performance in simulation setups. On the contrary, we focus on realistic evaluations of the joint network and recommendations. Our goal was to (i) enable researchers perform realistic evaluations of their solutions, (ii) verify that the claimed performance gains can hold also in practice (i.e., in real setups), (iii) provide evidence for the assumptions made by previous works that users will be willing to follow ``nudged'' network-aware recommendations.}

\mjr{Finally, while in this paper we focused on the YouTube case, the proposed methodology is generalizable to other services and settings. The simplicity of the \ourAlgo algorithm makes it easily implementable, without this having a negative effect on performance, as shown by our results or, e.g., the evaluation in~\cite{li2020leveraging} for short-video services scenarios where \ourAlgo achieves comparable performance to state-of-the-art schemes~\cite{li2020leveraging,zhang2019challenges_short_video}.}

\section{Conclusion}
\label{sec:conclusions}
In this paper, we proposed a methodology that enables to evaluate joint network and recommendation techniques in realistic setups, by leveraging available information from real recommendation systems. Enabling realistic evaluations of previous or future works can be important for fine-tuning the parameters and assumptions of the proposed solutions, as well as provide insights for potential practical challenges. 

Our results on the YouTube video service showed that the significant gains that have been indicated in related literature, can be achieved in practice as well. This is a positive message for the feasibility and benefits of the joint network and recommendations paradigm. To further strengthen this message, we conducted experiments with real users to investigate the feasibility from the user perspective; our findings are the first to provide experimental evidence that network-aware recommendations can be perceived as non-intrusive by users (a major assumption in related work).

\mjr{We believe that our findings can motivate further research on the topic. For instance, large-scale experiments with users or measurements in real network conditions could provide useful results and insights for the design of operational network-aware recommendation systems.}

\section*{Acknowledgements} 
This research is co-financed by Greece and the European Union (European Social Fund- ESF) through the Operational Programme ``Human Resources Development, Education and Lifelong Learning'' in the context of the project ``Reinforcement of Postdoctoral Researchers - 2nd Cycle'' (MIS-5033021), implemented by the State Scholarships Foundation (IKY).

% \clearpage
% \newpage
% \savvas New Section : Start
% \section{Real Users and Evaluation}
% \label{sec:related}
% \input{realUsers}
% \savvas New Section : End

%\vspace{\baselineskip}
%\noindent\textbf{Acknowledgements.} 
%\section*{Acknowledgements.} This work has been funded by the European Research Council grant agreement no. 338402.%, project ``NetVolution''.

%\myitem{Acknowledgements.} This work has been funded by the European Research Council grant agreement no. 338402.

%\section*{Acknowledgements} This work has been funded by the European Research Council grant agreement no. 338402.

\bibliographystyle{IEEEtran}
\bibliography{Cacherec}
% \newpage

\begin{IEEEbiographynophoto}%[{\includegraphics[width=1in,height=1.25in,clip,keepaspectratio]{./photos/sermpezis2}}]
{Savvas Kastanakis}
received  the  B.Sc. degree  in  Computer  Science in 2018 and currently pursues an M.Sc. degree in Telecommunications and Networks at the Computer Science Department, University of Crete. His research interests include: Internet Measurements, Internet of Things, Network Security. He won second place in the 1st ACM SIGCOMM Hackathon in 2018. He is currently an ACM Student Member.
\end{IEEEbiographynophoto}
%\newpage
%\vfill

\begin{IEEEbiographynophoto}%[{\includegraphics[width=1in,height=1.25in,clip,keepaspectratio]{./photos/sermpezis2}}]
{Pavlos Sermpezis}
received the Diploma in Electrical and Computer Engineering from the Aristotle University of Thessaloniki (AUTH), Greece, and a PhD in Computer Science and Networks from EURECOM, Sophia Antipolis, France. He was a post-doctoral researcher at FORTH, Greece, and currently is a post-doctoral research at the Computer Science Dept. at AUTH, Greece. His main research interests are in modeling and performance analysis for communication networks, network measurements, and data science.
\end{IEEEbiographynophoto}

\begin{IEEEbiographynophoto}%[{\includegraphics[width=1in,height=1.25in,clip,keepaspectratio]{./photos/Vasileios_Kotronis_foto}}]
{Vasileios Kotronis}
received a Diploma in Electrical and Computer Engineering from the National Technical University of Athens, Greece and a PhD in Information Technology and Electrical Engineering from ETH Zurich, Switzerland. He is currently a post-doctoral researcher at FORTH, Greece. His main research interests include: Internet routing and measurements, software defined networking, and network security.
\end{IEEEbiographynophoto}

\begin{IEEEbiographynophoto}%[{\includegraphics[width=1in,height=1.25in,clip,keepaspectratio]{./photos/Vasileios_Kotronis_foto}}]
{Daniel Sadoc Menasch\'e} received the Ph.D.degree in computer science from the University of Massachusetts, Amherst, in 2011. He is currently an Assistant Professor with the Computer Science Department, Federal University of Rio de Janeiro, Brazil. His research interests are in modeling, analysis, security, and performance evaluation of computer systems. He was a recipient of the best paper awards at GLOBECOM 2007, CoNEXT 2009, INFOCOM 2013, and ICGSE2015. He is currently an Affiliated Member of the Brazilian Academy of Sciences.
\end{IEEEbiographynophoto}

\begin{IEEEbiographynophoto}%[{\includegraphics[width=1in,height=1.25in,clip,keepaspectratio]{./photos/Vasileios_Kotronis_foto}}]
{Thrasyvoulos Spyropoulos}
received the Diploma in Electrical and Computer Engineering from the National Technical University of Athens, Greece, and a Ph.D degree in Electrical Engineering from the University of Southern California. He was a post-doctoral researcher at INRIA and then, a senior researcher with the Swiss Federal Institute of Technology (ETH) Zurich. He is currently an Assistant Professor at EURECOM, Sophia-Antipolis. He is the recipient of the best paper award in IEEE SECON 2008, and IEEE WoWMoM 2012.
\end{IEEEbiographynophoto}

\vfill

\newpage
\appendices
\section{Caching Optimization under \ourAlgo}\label{appendix:joint-cache-rec}%\label{sec:cabaret-caching-optimization}

\subsection{Problem Formulation and Optimization Algorithm}
In the following, we first analytically formulate and study the problem of optimizing the caching policy under \ourAlgo recommendations, and propose an approximation algorithm with provable performance guarantees. 

Let a content catalog $\mathcal{V}$, $V=|\mathcal{V}|$, and a content popularity vector $\mathbf{q} = [q_{1}, ..., q_{V}]^{T}$. Let $\mathcal{L}(v)\subseteq \mathcal{V}$ be the set of contents that are explored by \ourAlgo (at \textit{line 1}) for a content $v\in\mathcal{V}$.%, and denote $\mathcal{L} = \bigcup_{v\in\mathcal{V}}\mathcal{L}(v)$.

For some set of cached contents $\mathcal{C}\subseteq\mathcal{V}$, and a content $v$, \ourAlgo returns a list of recommendations $\mathcal{R}(v)$ ($|\mathcal{R}(v)|=N$)%, in which at most $N$ contents $c\in \mathcal{C}\cap\mathcal{L}(v)$ appear at the top of the list
. Therefore, CHR can be expressed as
\begin{equation}\label{eq:chr-generic}
CHR(\mathcal{C}) = \sum_{v\in\mathcal{V}}q_{v}\sum_{i=1}^{N(\mathcal{C},v)} p_{i}
\end{equation}
where $N(\mathcal{C},v) = \min\{|\mathcal{C}\cap\mathcal{L}(v)|, N\}$, and $p_{i}$ is the probability for a user to select the $i^{th}$ recommended content.% (as defined in \secref{sec:measurements}).

Then, the problem of optimizing the caching policy (to be jointly used with \ourAlgo), is formulated as follows:
\begin{equation}\label{optim-problem}
\max_{\mathcal{C}} ~CHR(\mathcal{C})~~~~~\textrm{s.t.,} |\mathcal{C}|\leq C
\end{equation}
where $C$ is the capacity of the --MEC-- cache.
% \begin{problem}
% \begin{align*}
% max_{\mathcal{C}} ~~~CHR(\mathcal{C})\\
% s.t., |\mathcal{C}|\leq C
% \end{align*}
% \end{problem}
We prove the following for the optimization problem of \eq{optim-problem}.
%We prove the following Lemma, whose proof is omitted due to space limitations.
\begin{mylemma}
The optimization problem of~\eq{optim-problem}: (i) is NP-hard, (ii) cannot be approximated within $1-\frac{1}{e}+o(1)$ in polynomial time, and (iii) has a monotone (non-decreasing) submodular objective function, and is subject to a cardinality constraint.
\end{mylemma}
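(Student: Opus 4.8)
The plan is to address the three claims in the order (iii), (i), (ii), since the structural property in (iii) is the cleanest to establish and the reduction used for (i) also delivers (ii) almost for free.

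For (iii), I would first note that $CHR$ is a non-negative weighted sum $\sum_{v\in\mathcal{V}} q_{v}\, g_{v}(\mathcal{C})$ with $g_{v}(\mathcal{C})=\sum_{i=1}^{\min\{|\mathcal{C}\cap\mathcal{L}(v)|,N\}} p_{i}$, and that monotonicity and submodularity are preserved under non-negative linear combinations; hence it suffices to treat a single $g_{v}$. The key observation is that $g_{v}(\mathcal{C})=f(|\mathcal{C}\cap\mathcal{L}(v)|)$ where $f(k)=\sum_{i=1}^{\min\{k,N\}} p_{i}$, so $g_{v}$ depends on $\mathcal{C}$ only through the non-negative \emph{modular} set function $\mathcal{C}\mapsto|\mathcal{C}\cap\mathcal{L}(v)|$. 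Under the top-preference ordering $p_{1}\ge p_{2}\ge\cdots\ge p_{N}\ge 0$ maintained throughout the paper, the discrete increments of $f$ are $p_{1},\dots,p_{N},0,0,\dots$, which are non-negative and non-increasing, so $f$ is non-decreasing and concave. Invoking the standard fact that a concave, non-decreasing scalar function composed with a non-negative modular set function is monotone submodular completes this part; concretely, the marginal gain of adding $x$ to $\mathcal{C}$ equals $p_{|\mathcal{C}\cap\mathcal{L}(v)|+1}$ when $x\in\mathcal{L}(v)\setminus\mathcal{C}$ and the count is below $N$, and $0$ otherwise, which is non-increasing as $\mathcal{C}$ grows.

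For (i), I would give an approximation-preserving reduction from weighted Maximum Coverage, specializing to $N=1$. With $N=1$ the inner sum collapses to $p_{1}\cdot\mathbb{I}[\mathcal{C}\cap\mathcal{L}(v)\neq\emptyset]$, so $CHR(\mathcal{C})=p_{1}\sum_{v} q_{v}\,\mathbb{I}[\mathcal{C}\cap\mathcal{L}(v)\neq\emptyset]$. Given a Maximum Coverage instance with subsets $T_{1},\dots,T_{m}$ over a universe $U$ and budget $C$, I would introduce one ``item'' content $j$ per subset $T_{j}$ and one ``demand'' content $u$ per element $u\in U$, set $q_{u}=1$, $q_{j}=0$, and define $\mathcal{L}(u)=\{\,j: u\in T_{j}\,\}$ and $\mathcal{L}(j)=\emptyset$. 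Then a cache $\mathcal{C}$ of items covers demand $u$ exactly when $u\in\bigcup_{j\in\mathcal{C}}T_{j}$, so $CHR(\mathcal{C})/p_{1}$ is precisely the coverage of $\mathcal{C}$. Because demand contents appear in no $\mathcal{L}(\cdot)$, caching them is useless, so an optimal cache may be taken among item-contents only; thus maximizing $CHR$ subject to $|\mathcal{C}|\le C$ is exactly (weighted) Maximum Coverage, which is NP-hard.

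For (ii), I would observe that the reduction above is exact up to the positive scale factor $p_{1}$ and hence ratio-preserving: any polynomial-time $\alpha$-approximation for \eq{optim-problem} yields an $\alpha$-approximation for Maximum Coverage. By Feige's inapproximability theorem, Maximum Coverage admits no polynomial-time $(1-\tfrac{1}{e}+\epsilon)$-approximation for any $\epsilon>0$ unless $\mathrm{P}=\mathrm{NP}$, and this transfers verbatim, giving the stated $1-\tfrac{1}{e}+o(1)$ barrier. The main obstacle is engineering the reduction so that the capped $\min\{\cdot,N\}$ structure of $N(\mathcal{C},v)$ genuinely degenerates to set coverage; the choice $N=1$ is what makes this clean, and I would carefully verify that the $q_{j}=0$, $\mathcal{L}(j)=\emptyset$ construction precludes any spurious gain from caching demand-contents. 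The submodularity argument is routine once the concave-of-modular structure is identified, so the reduction is where the care is required.
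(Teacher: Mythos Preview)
Your proposal is correct and follows essentially the same route as the paper: the paper proves (i) and (ii) by reduction to maximum set coverage and (iii) by ``standard methods,'' and your concave-of-modular argument for (iii) together with the $N=1$ Max-Coverage reduction for (i)--(ii) is exactly a fleshed-out version of that sketch. One small point worth making explicit is that your submodularity proof requires the top-preference ordering $p_1\ge p_2\ge\cdots\ge p_N$, which the paper assumes contextually but does not restate in the lemma; you have correctly flagged this.
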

\begin{proof}
Items (i) and (ii) of the above lemma, are proven by reduction to the \textit{maximum set coverage} problem, and we prove item (iii) using standard methods (see, e.g., similar proofs in~\cite{femto,sermpezis-sch-globecom}). %The detailed proof is omitted due to space limitations.
%\pavlos{To add full proof}
\end{proof}

If we design a greedy algorithm that starts from an empty set of cached contents $\mathcal{C}_{g}=\emptyset$, and at each iteration it augments the set $\mathcal{C}_{g}$ (until $|\mathcal{C}_{g}|= C$) as follows:
\begin{equation}\label{eq:greedy-algo}
\mathcal{C}_{g}\leftarrow \mathcal{C}_{g}\cup \arg\max_{v\in\mathcal{V}} CHR(\mathcal{C}_{g}\cup\{v\}),
\end{equation}
then the properties stated in item (iii) satisfy that it holds~\cite{krause2012submodular}%for the solution yielded by a greedy algorithm $\mathcal{C}_{g}$ it holds~\cite{krause2012submodular} that:
\begin{equation}\label{eq:greedy-bound}
CHR(\mathcal{C}_{g}) \geq \left(1-\frac{1}{e}\right)\cdot CHR(\mathcal{C}^{*})
\end{equation}
where $\mathcal{C}^{*}$ the optimal solution of the problem of~\eq{optim-problem}. 

\textit{Remark}: While \eq{eq:greedy-bound} gives a lower bound for the performance of the greedy algorithm, in practice greedy algorithms have been shown to perform often very close to the optimal~\cite{bian2017guarantees}.
\subsection{Results under Greedy Caching}\label{sec:results-greedy}

\blue{We investigate the performance when the list of cached contents is selected to optimize the CHR by using the greedy algorithm% of~\secref{sec:cabaret-caching-optimization}
. We consider both ``front-page'' and ``search bar'' video demands.}

\myitem{Efficiency vs. scalability.} Calculating the CHR from \eq{eq:chr-generic} requires running a BFS (\ourAlgo, \textit{line 1}) and generating the lists $\mathcal{L}(v)$, for every content $v\in \mathcal{V}$. In practice, for scalability reasons, the most popular contents (i.e., with high $q_{i}$) can be considered by the greedy algorithm in the calculation of the objective function \eq{eq:chr-generic}, since those contribute more to the objective function. \blue{To demonstrate the involved trade-offs between scalability and performance, we consider two scenarios with synthetic content catalogs of size $|\mathcal{V}|$=1000 and $|\mathcal{V}|$=10000 (where content popularity $q_{i}$ follows a Zipf(a=1) distribution, and each content is related on average with $10$ other contents), and calculate the CHR achieved by \ourAlgo (N=10, $D_{BFS}$=2, $W_{BFS}$=5) when the greedy algorithm considers only a fraction $\mathcal{V}^{'}$ of the entire catalog, $\mathcal{V}^{'}\subseteq \mathcal{V}$, and a cache of size $C$=10. Table~\ref{tab:cabaret-greedy-vs-most-popular} presents the achieved CHR, normalized over the maximum CHR achieved when considering the entire catalog $\mathcal{V}$. We can see that even considering very small fractions of the content catalog in the caching decisions, can still achieve significant performance, while considering a 10\% of the content catalog can already achieve 90\% and 86\% of the maximum performance in the case of $|\mathcal{V}|$=1000 and $|\mathcal{V}|$=10000, respectively. This indicates that \ourAlgo-like approaches can be an efficient and scalable in real systems with very large content catalogs.}

%\pavlos{@Savvas: Add a plot for this scenario: Do a random content catalogue with K=1000 or 10000 contents with Zipf popularities (a=1), and a content matrix where each content is connected/related randomly with 10 or 20 other contents. Start considering in Eq.(1) only the q of the top x contents and run the greedy algorithm with C=10 or 50, let the sum of the popularities of the contents that are selected to be cached is y. Plot x vs y.}

\begin{table}[h]
\centering
\caption{\blue{CHR under caching with the greedy algorithm considering only a fraction of the most popular contents of the catalog, $i\in \mathcal{V}^{'}\subset \mathcal{V}$; values are normalized over the maximum achievable performance.}}
\label{tab:cabaret-greedy-vs-most-popular}
\begin{tabular}{cl|cccc}
\multicolumn{2}{c|}{fraction of the catalog $\frac{|\mathcal{V}^{'}|}{|\mathcal{V}|}$ }
    & 0.1\% & 1\% & 5\% & 10\%  \\
\hline
\multirow{2}{*}{$\frac{CHR(\mathcal{V}^{'})}{CHR(\mathcal{V})}$ }
& $|\mathcal{V}|=1000$
    & 0.72 & 0.74 & 0.89 & 0.90\\
& $|\mathcal{V}|=10000$            
    &  0.40 & 0.54  & 0.82  & 0.86 
\end{tabular}
\end{table}

%However, 
\blue{The reason that the greedy algorithm remains efficient even with this simplification, is that }%
any video in the catalog is still candidate to be cached, e.g., a video with low $q_{i}$ can bring a large increase in the CHR through its association with many popular contents. In fact, in our experiments, for the calculation of \eq{eq:chr-generic}, we consider only the 50 most popular videos, for which we set $q_{i} = \frac{1}{50}$. Nevertheless, in the different scenarios we tested, only 10\% to 30\% of the cached videos (selected by the greedy algorithm) were also in the top 50 most popular.%However, for the performance evaluation of CHR, we consider all involved contents, similarly to \secref{sec:results-single}. 

\myitem{Greedy vs. Top caching.} In Fig.~\ref{fig:chr-vs-c-greedy}, we compare the achieved CHR for ``Front Page'' video demand, when the cache is populated according to the greedy algorithm of \eq{eq:greedy-algo} (\textit{Greedy Caching}) and with the top most popular videos (\textit{Top Caching}). \textit{Greedy caching} always outperforms \textit{top caching}, with an increase in the CHR of around a factor of 2 for uniform video selection (for the Zipf($a$=1) scenarios we tested, the CHR values are even higher, and the relative performance is 1.5 times higher). This clearly demonstrates that the gains from joint recommendation and caching~\cite{sermpezis-sch-globecom,chatzieleftheriou2017caching}, are applicable even in simple practical scenarios (e.g., \ourAlgo \& greedy caching). Finally, while \textit{greedy caching} increases the CHR even with regular YouTube recommendations, the CHR is still less than $50\%$ of the \ourAlgo case with \textit{top caching}. This further stresses the benefits from \ourAlgo's cache-aware recommendations.

\begin{figure}\centering
\centering
\subfigure[ ``Front Page'' video demand]{
\includegraphics[width=0.47\linewidth]{./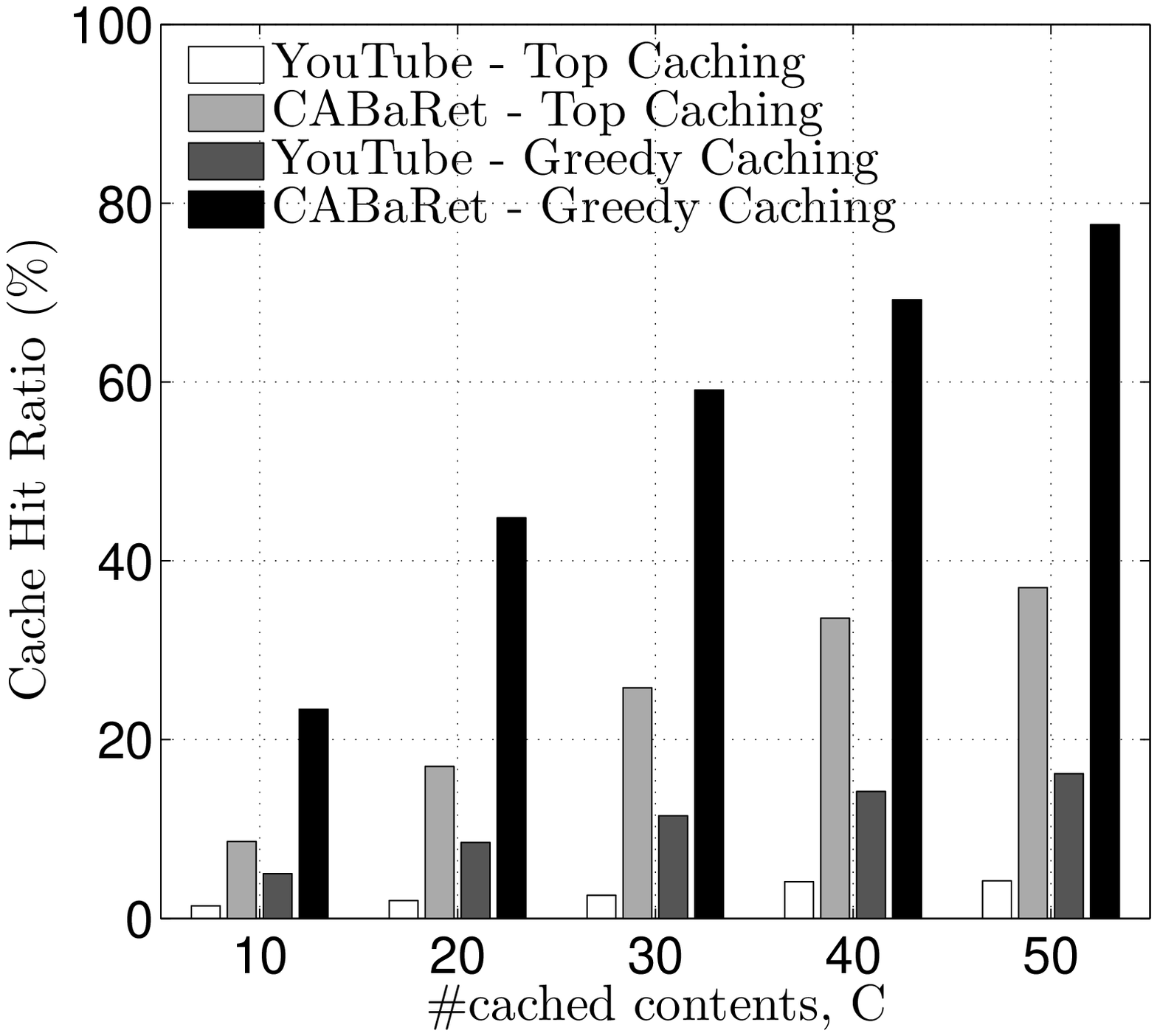}\label{fig:chr-vs-c-greedy}}
\subfigure[``Search Bar'' video demand]{
\includegraphics[width=0.47\linewidth]{./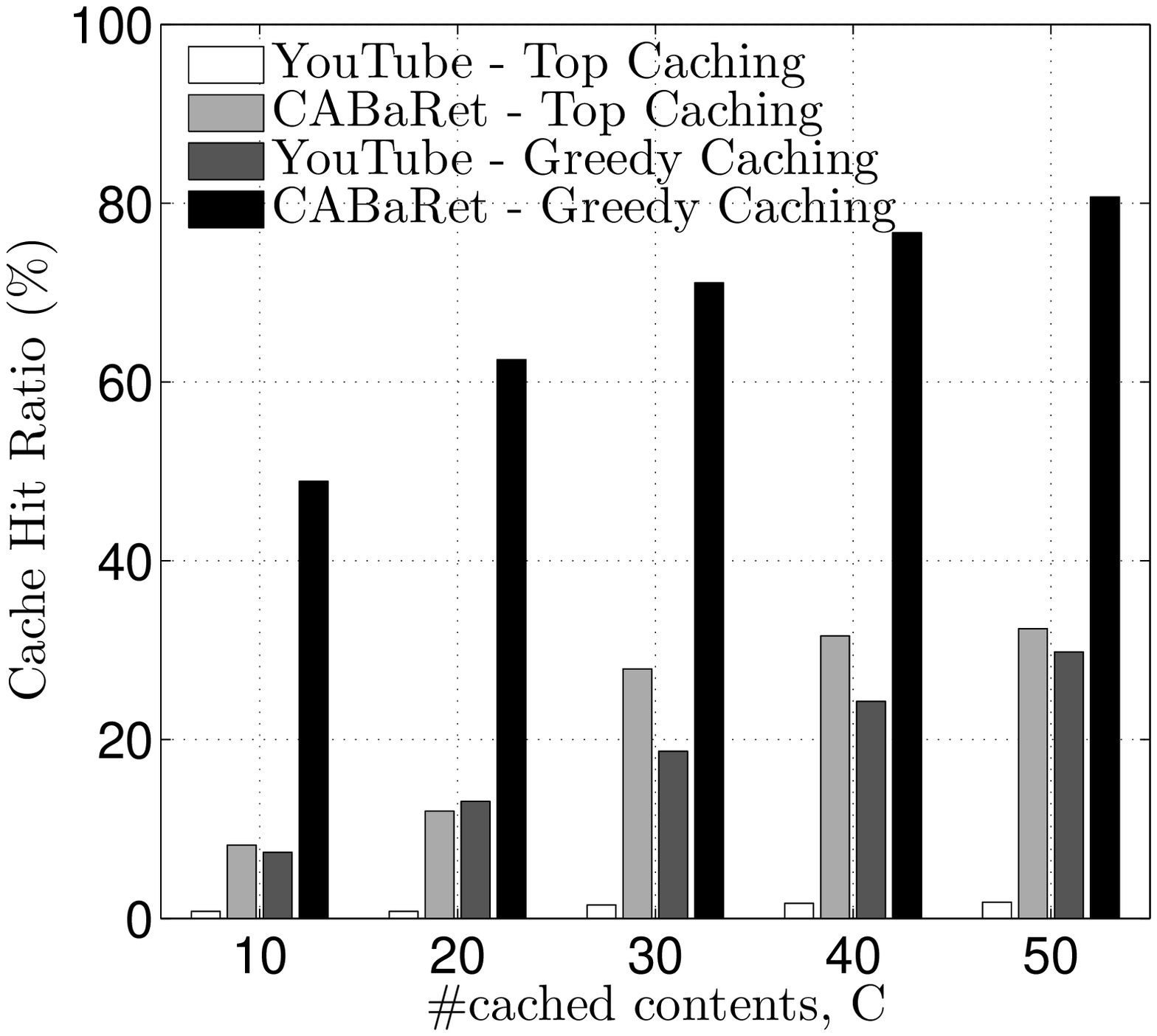}\label{fig:greedyCachingZipf1GR}}
\caption{CHR vs. \# cached contents with \ourAlgo parameters $W_{BFS}$=50 and $D_{BFS}$=2, and video demand (a) ``Front Page'' video demand with $p_{i}\sim$\textit{uniform}, and (b) ``Search Bar'' video demand with  $p_{i}\sim$\textit{Zipf(a=1)} %GR region
.}
\end{figure}

\blue{Similar findings can be seen in Fig.~\ref{fig:greedyCachingZipf1GR} for scenarios with ``Search Bar'' video demand. A difference is that in these scenarios the CHR under YouTube recommendations with \textit{greedy caching} is comparable to \ourAlgo recommendations with \textit{top caching}, which indicates that similar performance can be achieved by carefully selecting either only the recommendations (\ourAlgo+ \textit{top caching}) or only the caching (YouTube + \textit{greedy caching}). However, when combining both (\ourAlgo + \textit{greedy caching}), increases more than two times the CHR.}

\end{document}